\def\bSig\mathbf{\Sigma}
\def\bSig\mathbf{\Sigma}
\newcommand{\tr}{\mbox{tr}}
\newcommand{\eig}{\mathrm{eig}}
\newcommand{\mcD}{\mathcal{D}}
\newcommand{\mcL}{\mathcal{L}}
\newcommand{\mcW}{\mathcal{W}}
\newcommand{\bX}{\bm{X}}
\newcommand{\bx}{\bm{x}}
\newcommand{\by}{\bm{y}}
\newcommand{\bgamma}{\bm{\gamma}}
\newcommand{\RR}{{\mathbb R}}
\newcommand{\Ind}{1\!\mathrm{l}}
\numberwithin{equation}{section}
\newtheorem{theorem}{Theorem}[section]
\newtheorem{lemma}[theorem]{Lemma}
\newtheorem{proposition}[theorem]{Proposition}
\newtheorem{remark}[theorem]{Remark}
\DeclarePairedDelimiter\floor{\lfloor}{\rfloor}
\newcommand{\blind}{0}
\newcommand{\beginsupplement}{%
        \setcounter{table}{0}
        \renewcommand{\thetable}{S\arabic{table}}%
        \setcounter{section}{0}%
        \renewcommand{\thesection}{S\arabic{section}}  
        \setcounter{figure}{0}
        \renewcommand{\thefigure}{S\arabic{figure}}%
     }
\date{}
\begin{document}

\def\spacingset#1{\renewcommand{\baselinestretch}%
{#1}\small\normalsize} \spacingset{1}


\if0\blind
{
  \title{\bf Bayesian Nonparametric Graph Clustering}
  \author{Sayantan Banerjee\\
    Department of Biostatistics,\\The University of Texas MD Anderson Cancer Center\\
    and \\
    Rehan Akbani\\
    Department of Bioinformatics and Computational Biology, \\The University of Texas MD Anderson Cancer Center\\
    and\\
    Veerabhadran Baladandayuthapani \footnote{Corresponding author, e-mail: veera@mdanderson.org}\\
    Department of Biostatistics,\\ The University of Texas MD Anderson Cancer Center}
  \maketitle
} \fi

\if1\blind
{
  \bigskip
  \bigskip
  \bigskip
  \begin{center}
    {\LARGE\bf Title}
\end{center}
  \medskip
} \fi

\bigskip
\begin{abstract}
We present clustering methods for multivariate data exploiting the underlying geometry of the graphical structure between variables. As opposed to standard approaches that assume known graph structures, we first estimate the edge structure of the unknown graph using  Bayesian neighborhood selection approaches, wherein we account for the uncertainty of graphical structure learning through model-averaged estimates of the suitable parameters. Subsequently, we develop a nonparametric graph clustering model on the lower dimensional projections of the graph based on Laplacian embeddings using Dirichlet process mixture models. In contrast to  standard algorithmic approaches, this fully probabilistic approach allows incorporation of uncertainty in estimation and inference for both graph structure learning and clustering. More importantly, we formalize the arguments for Laplacian embeddings as suitable projections for graph clustering by providing theoretical support for the consistency of the eigenspace of the estimated graph Laplacians. We develop fast computational algorithms that allow our methods to  scale to large number of nodes. Through  extensive simulations we compare our clustering performance with standard clustering methods. We apply our methods to a novel pan-cancer proteomic data set, and evaluate protein networks and clusters across multiple different cancer types.
\end{abstract}

\noindent%
{\it Keywords:}  Dirichlet process mixture models, Graph clustering, Graph Laplacian, Graphical models, Proteomic data, Spectral clustering.
\vfill

\newpage
\spacingset{1.45} 
\section{Introduction}
\label{Sec:Introduction}

Clustering is one of the most widely studied approaches for investigating dependencies in multivariate data that arise in several domains, such as the biomedical and social sciences. Standard clustering approaches are based on metrics that define the similarity and/or dependence between variables. Examples are the use of linkage-based (e.g., hierarchical clustering), distance-based (Euclidean, Manhattan, Minkowski) or kernel-based metrics. In this article, we focus on a particular sub-class of clustering methods that are based on graphical dependencies between the variables, especially for data lying on a graph, or data that can be modeled probabilistically using a graphical model. In many complex systems, graphs or networks can effectively represent the inter-dependence between the major variables of interest in various modeling contexts. Examples include protein-protein interaction networks in various cancer types, which may be modeled statistically to reveal the groups of proteins that play significant roles in different disease etiologies; social networks for users or communities that share common interests; and image networks, which help to identify similarly classified images. 

An important challenge of the investigation, when the number of nodes in a graph or network increases significantly, is to develop computationally efficient approaches for learning the network structure as well as inferring from the same, especially identifying the underlying clusters within the network. The investigation can also be addressed by identifying clusters of objects in the entire network, or as a graph partitioning problem, where the graph is partitioned into clusters such that the within-cluster connections are high and the between-cluster connections are low. Clusters of objects often help to reveal the underlying mechanism of the objects in the network with respect to the problem of interest. The graph partitioning problem is quite prevalent in the computer science and machine learning literature, where the main focus is on partitioning a given graph with known edge structures [see \cite{von2007tutorial} and a comprehensive list of references therein].
 
From a modeling standpoint, graphical models \citep{lauritzen1996graphical} are able to capture the conditional dependence structure through parametric (usually Gaussian) formulations on covariance matrices or precision (inverse covariance) matrices. Bayesian methods of inference that use graphical models have received a great deal of attention recently. One approach uses prior specifications on the space of sparse precision matrices. A conjugate family of priors, known as the $G$-Wishart prior \citep{roverato2000cholesky} has been developed for incomplete decomposable graphs. A more general family of conjugate priors for the precision matrix is the $W_{P_G}$-Wishart family of distributions; see \cite{letac2007wishart, rajaratnam2008flexible, banerjee2014posterior}.  Bayesian methods of inference that use priors on elements of the precision matrix have also been proposed, e.g., Bayesian graphical lasso \citep{wang2012bayesian}, shrinkage priors for precision matrices \citep{khondker2013bayesian, wang2013class}, and Bayesian graphical structure learning \citep{banerjee2015bayesian, wang2015scaling}. An alternative approach is to use regression models to estimate the precision matrix, including neighborhood selection of the vertices of a graph (see \cite{peng2009partial, meinshausen2006high}). Recent work by \cite{kundu2014} estimates the graphical model by using a post-model fitting neighborhood selection approach that is motivated by variable selection using penalized credible regions in the regression set-up, as introduced in \cite{bondell2012consistent}. Precision matrix estimation using Bayesian regression methods has been studied by \cite{dobra2004sparse, bhadra2013joint}, where priors have been used on the regression coefficients, inducing sparsity.

The above methods focus solely on graph structure learning. In this article, we focus on a broader problem of first learning the structure of the underlying graphical model and then learning the clusters using a valid probability model for the graph and the clusters within the graph. When the graphical structure is {\it{known}}, widely used algorithmic  graph partitioning/cutting methods can be used, including spectral graph clustering methods \citep{shi2000normalized, ng2002spectral, von2007tutorial}. Spectral methods serve to explore the geometry of the graphical structure and learn the underlying clusters by using the eigenvectors of suitable matrices (graph Laplacians) obtained from the graph.  These methods originated from primary works of \cite{fiedler1973algebraic} and \cite{donath1973lower}. While these methods have attractive computational properties, the main drawbacks are that they assume known graph structures and, due to the heuristic/algorithmic nature of the formulation, do not allow for the explicit incorporation of uncertainty in estimation and inference for both graph structure learning and clustering. For {\it unknown} graphical structures, a suitable measure of pairwise similarities between the nodes is defined, which is then used to derive the graph Laplacian. This relates to nonlinear dimension reduction techniques or manifold learning such as diffusion maps \citep{coifman2006diffusion} and Laplacian eigenmaps \citep{belkin2003laplacian}. The eigenspace of the graph Laplacian can identify the connected components of the graph; hence, it is important to use a suitable graph Laplacian that is close to the `true' graph Laplacian of the underlying graph. Asymptotic convergence of the graph Laplacian has been well studied in the literature for random graphs, that is, for graphs in which the data points are random samples from a suitable probability distribution [see \citet{rohe2011spectral} for a comprehensive list of references]. However, we deal with graphical models in which the edges do not follow a particular probability distribution, but are defined through suitable strength of association between different variables. In our case, such associations can be effectively expressed through partial correlations so that the conditional dependence structure of the corresponding graphical model is well preserved and interpretable.

In this article, we propose a fully probabilistic approach to the problem. Specifically,  we consider a $p$-dimensional Gaussian random variable $\bX = (X_1,\ldots, X_p)^T$, where the $p$ components of the random variable can be separated into $K \ll p$ (unique) clusters. The primary inferential aim is to retrieve the clusters along with their (connected) components from a random sample of size $n$ from the underlying distribution, where the dimension $p$ may grow with $n$. We de-convolve the problem into two sub-problems. First, we estimate the edge structure of the unknown graph using a Bayesian neighborhood selection approach, wherein we account for the uncertainty of graphical structure learning through model-averaged estimates of the suitable parameters. Second, conditional on the model-averaged estimate, we develop a nonparametric graph clustering on the Laplacian embedding of the variables using Dirichlet process mixture models. More importantly, we formalize the arguments for Laplacian embedding as suitable projections for graph clustering by providing theoretical support. Specifically, we establish the consistency of the eigenspace of the graph Laplacian obtained from the estimated graph, which guarantees that the estimated graph Laplacian can be used as a suitable graph clustering tool. From a computational standpoint, we adopt fast computational methods for both graphical structure learning and clustering of variables using nonparametric Bayesian methods. For clustering in particular, we resort to fast and scalable algorithms based on asymptotic representations of Dirichlet process mixture models as an alternative to computationally expensive posterior sampling algorithms, especially for high-dimensional settings. As an extension of this work, we show that our formulation can be used for multiple data sources, to simultaneously cluster the variables locally for each data set and also arrive at a global clustering for all the data sets combined.

We evaluate the operating characteristics of our methods using both simulated and real data sets. In simulations, we compare our Bayesian nonparametric graph-based clustering method with both standard (``off-the-shelf") graph-based clustering methods as well as methods that  do not incorporate the structural information obtained from a graphical model formulation so as to evaluate the effectiveness of using a graph-based clustering model. Using various metrics of clustering efficiency (such as normalized mutual information scores and between-cluster edge densities), we demonstrate the superior performance of our methods compared to the performance of the competing methods. Our methods are motivated by and applied to a novel pan-cancer proteomic data set, through which we evaluate protein networks and clusters across 11 different cancer types. Our analyses reveals several biologically-driven clusters that are are conserved across multiple cancers as well differential clusters that are cancer-specific.

The paper is organized as follows. In Section~\ref{Sec:Inference}, we lay out the inferential problem and present preliminaries on graphical models and related concepts, including graph Laplacian and Laplacian embedding. In Section~\ref{Sec:Weighted adjacency}, we propose  methods for constructing the weighted adjacency matrices, and in  Section~\ref{Sec:Nonparam graph clust}, we present our nonparametric graph clustering method. We discuss the theoretical results pertaining to the consistency of the graph Laplacian in Section~\ref{Sec:consistency}. We present the results of our simulation study in Section~\ref{Sec:Simulation}, and those of the real data analyses on pan-cancer proteomic networks in Section~\ref{Sec:Real data}. The Supplementary Materials contain extensions of our approach to multiple data sets and the results of additional  simulations and real data analyses. We provide proofs of the theoretical results in an Appendix.

\section{Basic inferential problem and preliminaries}
\label{Sec:Inference}

Let $\bX^{(n)} = (\bX_1,\ldots,\bX_n)$ be independent and identically distributed random variables of dimension $p$ with mean $\bm{0}$ and covariance matrix $\Sigma$. We write $\bX_i = (X_{1,i},\ldots,X_{p,i})^T$, and assume that $\bX_i,\,i=1,\ldots,n$ are multivariate Gaussian variables. 

We also assume that there are $K$ (true) underlying clusters for the $p$ variables defined as $\mathcal{C} = (C_1,\ldots,C_{K})$, which are tightly connected. Our basic inferential problem is to retrieve the clusters of variables from the data, $P(\mathcal{C}|\bX)$, which we do by breaking the problem into two sub-problems. (1) First, we learn the underlying graphical structure:  $P(\mathcal{G}|X)$ (see Section \ref{Sec:Weighted adjacency}). (2) Then, we infer $P(\mathcal{C}|\mathcal{G})$ using the appropriate probability models on the graph structures (see Section \ref{Sec:Nonparam graph clust}). We  discuss each of these constructions in the ensuing sections.

\subsection{Graph Laplacians}
Consider an undirected graph $\mathcal{G} = (V,E)$, where $V$ represents the set of vertices $\{1,\ldots,p\}$ and $E$ is the edge set such that $E \subset \{(l,j) \in V \times V: l < j \}$. Two vertices $v_l$ and $v_j$ are adjacent if there is an edge between them. The adjacency matrix is given by $A = (\!(a_{lj})\!)$, where $a_{lj} = 1$ or $0$ according to whether $(l,j) \in E$ or not. Alternatively, one may also consider a weighted graph, with weighted adjacency matrix $W = (\!(w_{lj})\!)$, where $w_{lj} (\geq 0)$ denotes the ``strength"  of association (defined suitably) and  absence of an edge is reflected by a strictly zero edge weight. The degree of a vertex $v_l$ is given by $d_l = \sum_{j = 1}^p w_{lj}$ and the  degree matrix can then be denoted by $D = \mathrm{diag}(d_1,\ldots,d_p)$. The clusters in the underlying variables are reflected by their conditional independence structure, so that variables in two different clusters are conditionally independent. The underlying graphical model forms a Markov random field such that the absence of an edge indicates that the corresponding random variables are conditionally independent. 

For an undirected weighted graph $\mathcal{G}$ with weighted adjacency matrix $W$ and degree matrix $D$, the graph Laplacian associated with $\mathcal{G}$ is defined as
\begin{equation*}
L = D - W.
\end{equation*}
To be more precise, $L$ is referred to as the unnormalized graph Laplacian. The normalized versions of the graph Laplacians are given by 
\begin{eqnarray}
L_{\mathrm{sym}} &=& I - D^{-1/2}WD^{-1/2}, \nonumber \\ 
L_{\mathrm{rw}} &=& I - D^{-1}W. \nonumber
\end{eqnarray}

All of the graph Laplacians, $L, L_{\mathrm{sym}}$, and $L_{\mathrm{rw}}$, satisfy the property that they are positive semi-definite with $p$ non-negative real valued eigenvalues $0 = \lambda_1\leq \lambda_2 \leq \ldots \leq \lambda_p$. In fact, the multiplicity of the zero eigenvalue of the Laplacian is related to the number of connected components of the underlying graph, which is formally stated in the following proposition.

\begin{proposition}
\label{Prop:Lap}
\emph{Let $\mathcal{G}$ be an undirected graph with non-negative weights. Then the multiplicity $d$ of the zero eigenvalue of the graph Laplacian $L$ (or $L_{\mathrm{rw}}$, $L_{\mathrm{sym}}$) equals the number of connected components in the graph.}
\end{proposition}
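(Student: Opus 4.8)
The plan is to reduce both normalized Laplacians to the unnormalized one $L = D - W$ and to exploit its quadratic form. First I would record the elementary identity that for every vector $f = (f_1,\ldots,f_p)^T \in \RR^p$,
\begin{equation*}
f^T L f = \frac{1}{2}\sum_{l,j=1}^p w_{lj}(f_l - f_j)^2,
\end{equation*}
which follows by expanding the right-hand side and using $d_l = \sum_j w_{lj}$. Because the weights are non-negative, this re-confirms that $L$ is positive semi-definite, and, more usefully, it shows that $f \in \ker L$ if and only if $f^T L f = 0$, i.e. if and only if $f_l = f_j$ for every pair $(l,j)$ with $w_{lj} > 0$.

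The second step is to translate this condition into a statement about connectivity. Writing $A_1,\ldots,A_d$ for the connected components of $\mathcal{G}$, the requirement $f_l = f_j$ whenever $w_{lj}>0$ forces $f$ to be constant on each $A_r$, since any two vertices in the same component are joined by a path of positive-weight edges. Hence $\ker L$ is exactly the span of the indicator vectors $\mathbf{1}_{A_1},\ldots,\mathbf{1}_{A_d}$. These have pairwise disjoint supports and are therefore linearly independent, so $\dim \ker L = d$; equivalently, the multiplicity of the zero eigenvalue equals the number of connected components. An equivalent and perhaps cleaner bookkeeping is to relabel vertices so that $L$ is block diagonal with one block per component, settle the connected case $d = 1$ directly, and sum over blocks.

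The final step is to transfer the count to the normalized Laplacians. For $L_{\mathrm{rw}} = I - D^{-1}W = D^{-1}L$, assuming $D$ is invertible (no zero-degree vertices), we have $L_{\mathrm{rw}} f = 0 \iff Lf = 0$, so the zero-eigenspaces coincide and the multiplicity is unchanged, with the same eigenvectors $\mathbf{1}_{A_r}$. For $L_{\mathrm{sym}} = D^{-1/2} L D^{-1/2}$, the linear map $u \mapsto D^{1/2} u$ is a bijection carrying $\ker L$ onto $\ker L_{\mathrm{sym}}$, so again the dimension is $d$. I expect the only point requiring genuine care to be this last step: one must ensure $D$ is invertible and track the congruence correctly, so that the zero-eigenvectors of $L_{\mathrm{sym}}$ are recovered as $D^{1/2}\mathbf{1}_{A_r}$ rather than $\mathbf{1}_{A_r}$ itself. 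Everything else is standard linear algebra once the quadratic-form identity is in hand.
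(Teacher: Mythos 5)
Your proof is correct, and it is essentially the argument the paper relies on: the paper does not prove the proposition itself but cites von Luxburg (2007), whose proof is exactly this one --- the quadratic-form identity $f^TLf=\tfrac12\sum_{l,j}w_{lj}(f_l-f_j)^2$, the identification of $\ker L$ with the span of the component indicator vectors, and the transfer to $L_{\mathrm{rw}}$ and $L_{\mathrm{sym}}$ via $D^{-1}$ and the congruence by $D^{1/2}$ (with eigenvectors $D^{1/2}\mathbf{1}_{A_r}$ in the symmetric case). Your caveat that $D$ must be invertible is the right one to flag and is consistent with the cited treatment.
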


A formal proof of the above proposition along with the other properties of graph Laplacians can be found in \cite{von2007tutorial}. In light of the above proposition, graph Laplacians can provide a natural tool for clustering the underlying variables. We present two stylized examples in Figure~\ref{fig:connectLap}, one for a 10-dimensional graph with 3 connected components and another for an 18-dimensional graph with 6 connected components. The eigenvalues of the graph Laplacian equal the number of connected components in both cases. Also, the plot of the eigenvectors corresponding to the zero eigenvalues reveals that the eigenspaces of zero eigenvalues are spanned by indicator vectors that correspond to the connected components.

In our setting, we utilize this information to construct a weighted adjacency matrix of dimension $p$, using suitable edge weights (which we discuss later), and then obtain the corresponding graph Laplacian. For the sake of brevity, we refer to the different forms of the graph Laplacian as $L$, irrespective of whether they are normalized or not.

\begin{figure} 
\centering 
\begin{subfigure}[b]{0.3\textwidth}
\includegraphics[width=\textwidth]{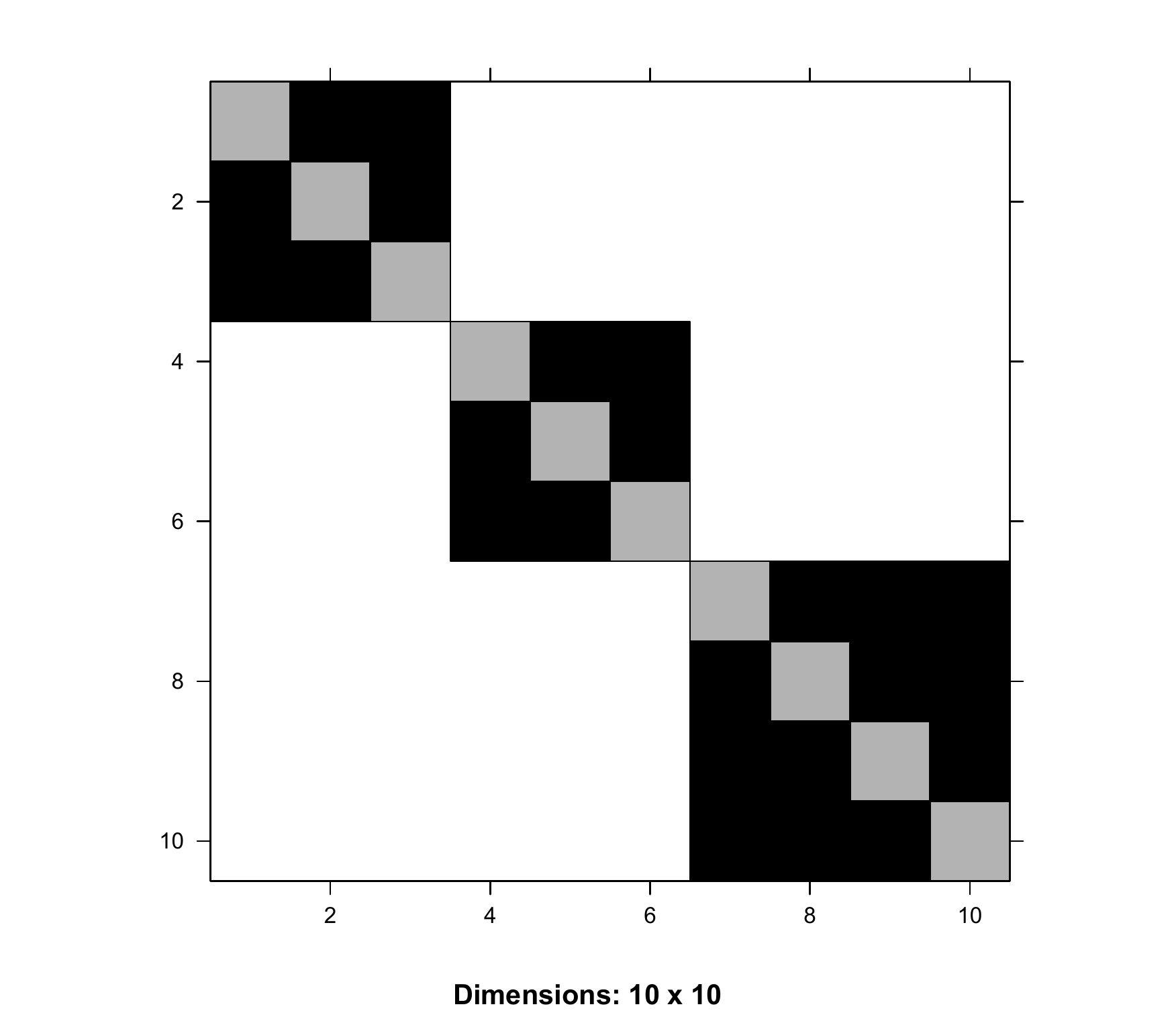} 
\caption{Adjacency matrix} 
\label{fig:gull} 
\end{subfigure} ~ 
\begin{subfigure}[b]{0.3\textwidth}
\includegraphics[width=\textwidth]{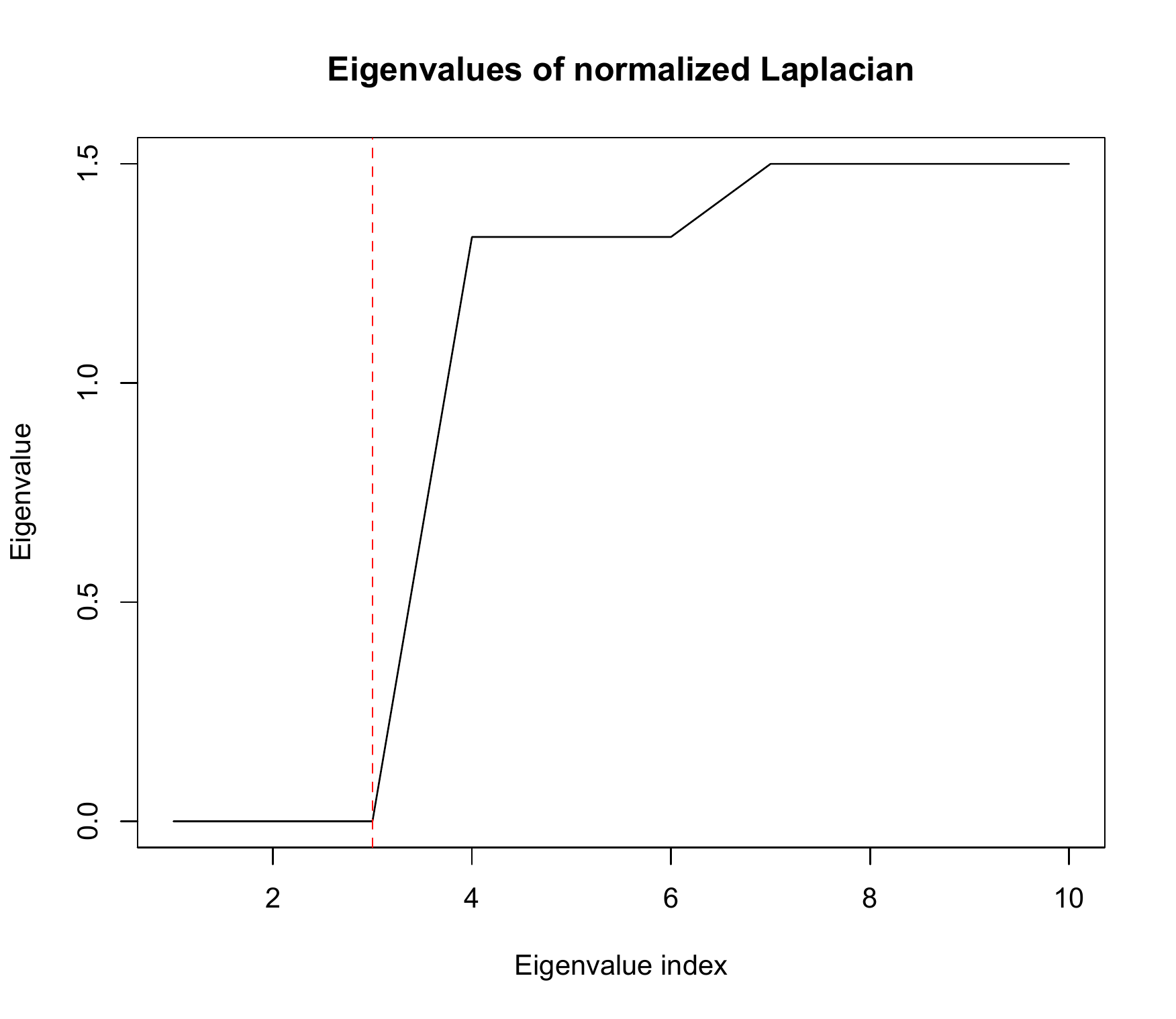} 
\caption{Plot of eigenvalues} \label{fig:tiger} 
\end{subfigure} ~ 
\begin{subfigure}[b]{0.3\textwidth}
\includegraphics[width=\textwidth]{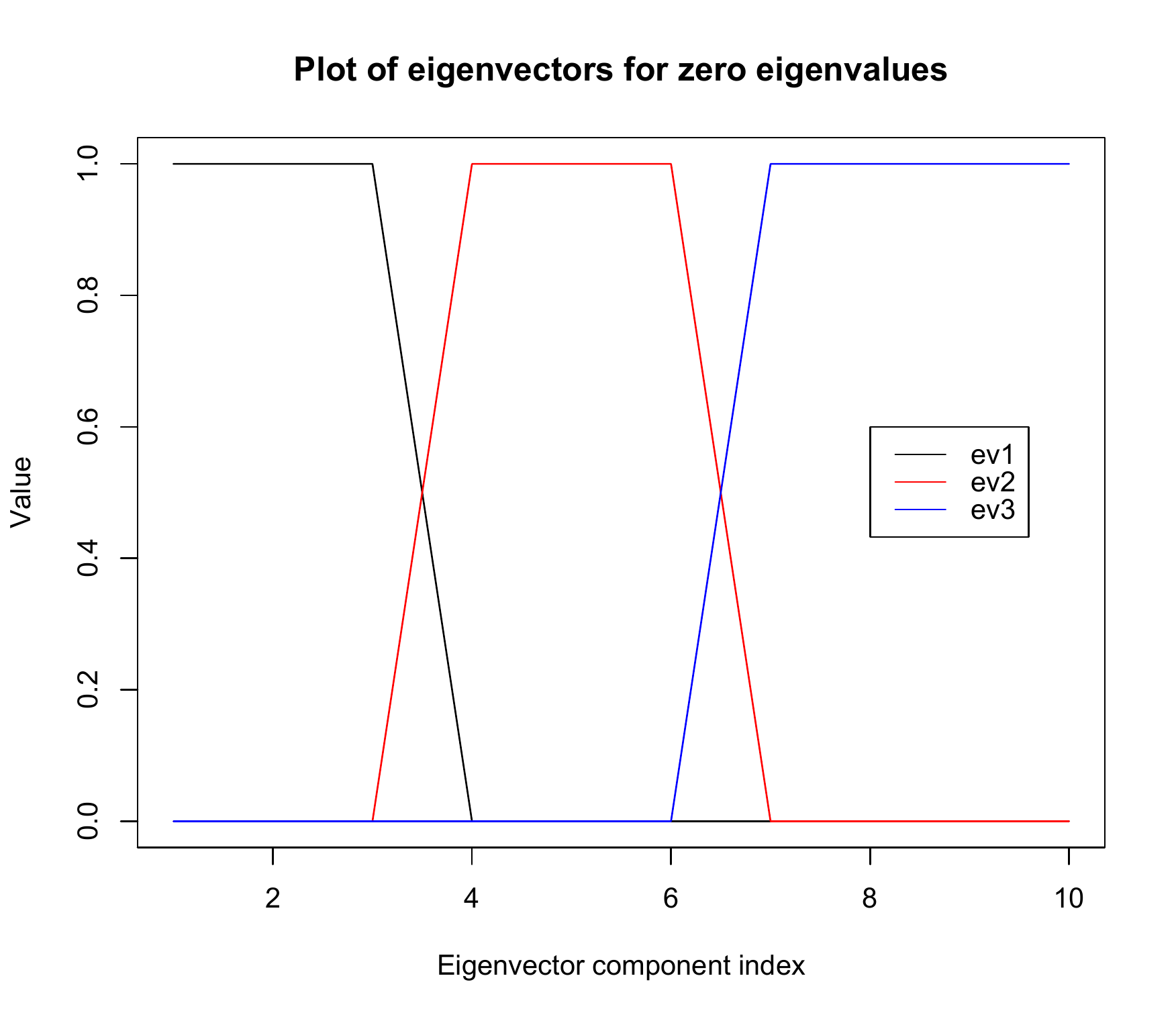} 
\caption{Plot of eigenvectors} 
\end{subfigure} 

\begin{subfigure}[b]{0.3\textwidth}
\includegraphics[width=\textwidth]{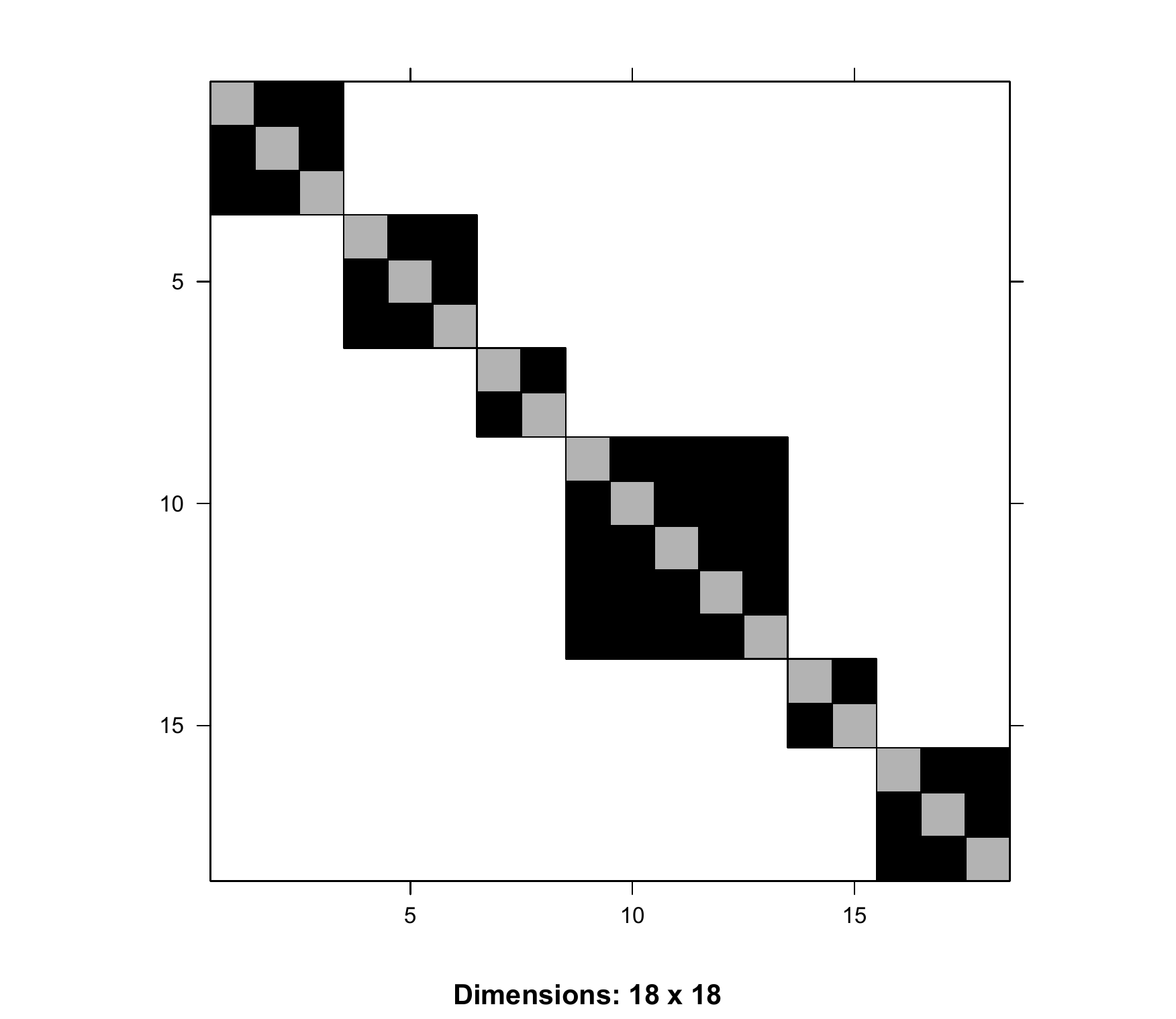} 
\caption{Adjacency matrix} 
\label{fig:gull2} 
\end{subfigure} ~ 
\begin{subfigure}[b]{0.3\textwidth}
\includegraphics[width=\textwidth]{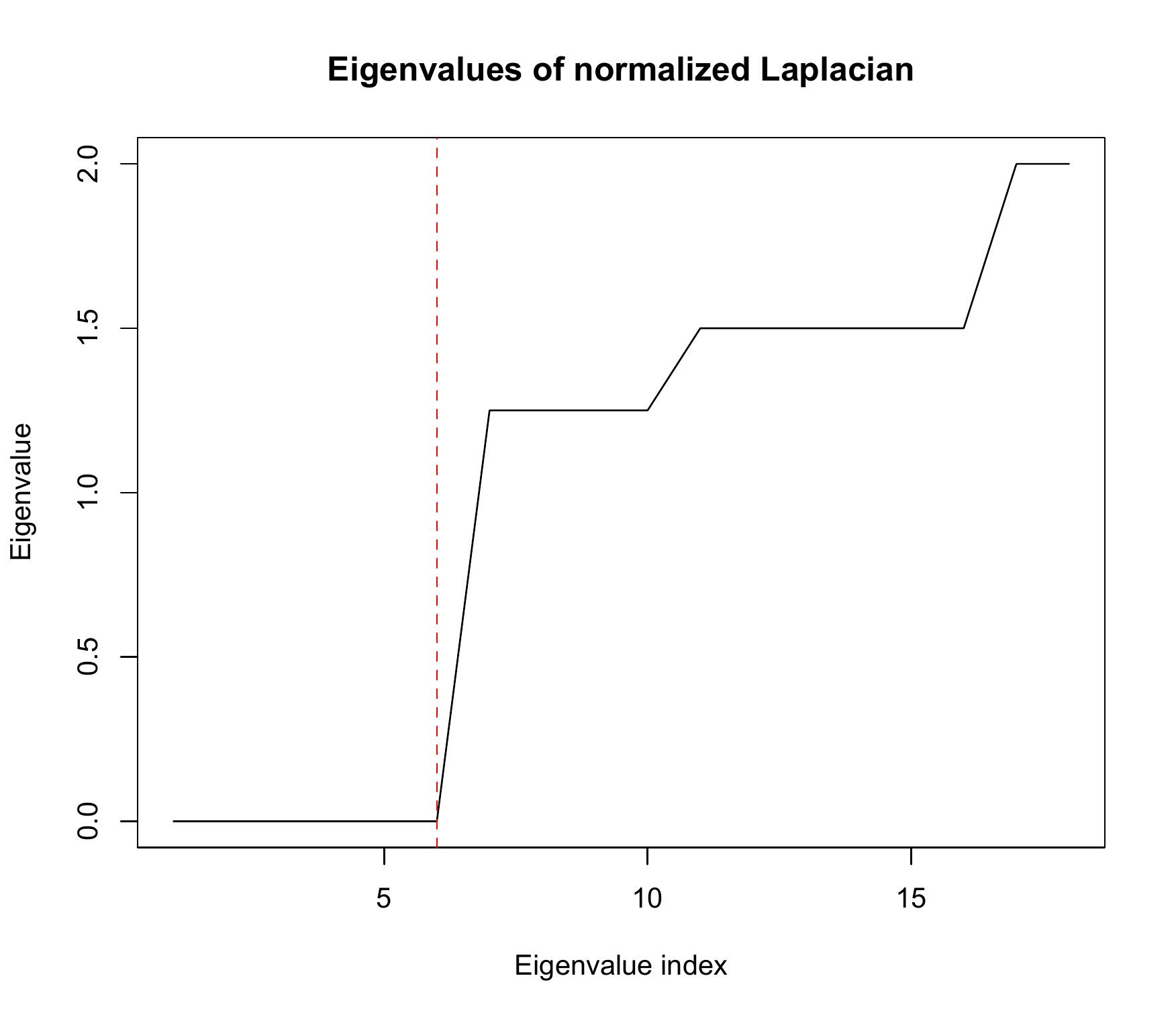} 
\caption{Plot of eigenvalues} \label{fig:tiger2} 
\end{subfigure} ~ 
\begin{subfigure}[b]{0.3\textwidth}
\includegraphics[width=\textwidth]{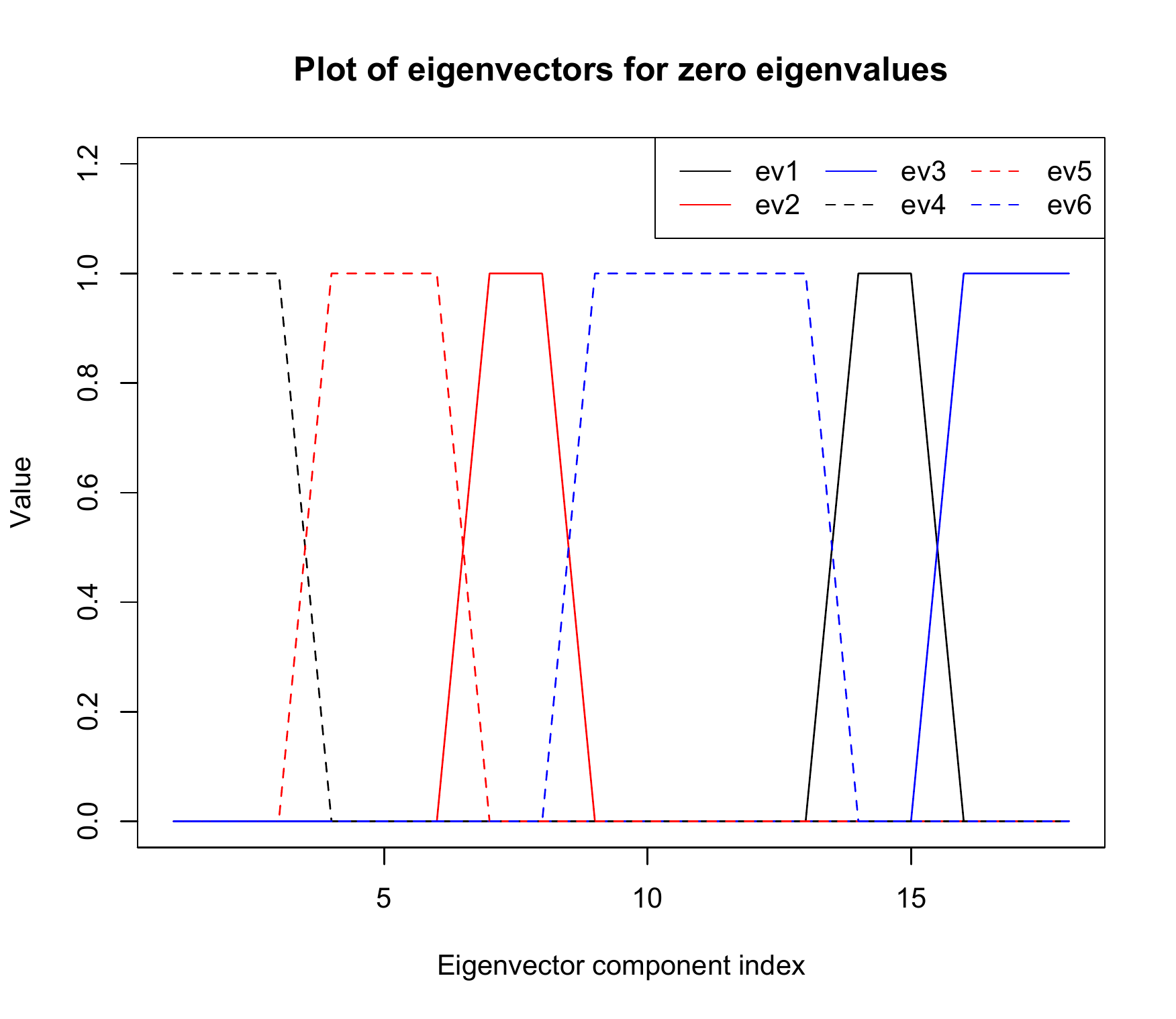} 
\caption{Plot of eigenvectors} 
\end{subfigure} 

\caption{Behavior of graph Laplacians in detecting connected components of a graph. The adjacency matrices are shown in plots (a) a 10-dimensional graph with 3 connected components and (d) an 18-dimensional graph with 6 connected components. Plots (b) and (e) show the eigenvalues of the corresponding normalized Laplacian of the graphs (the red line demarcates the zero and non-zero eigenvalues). Plots (c) and (f) show the eigenvectors (ev) corresponding to the zero eigenvalues of the graph Laplacians.}
\label{fig:connectLap} 
\end{figure}

\subsection{Laplacian embeddings}
Since $L$ is a symmetric matrix, it has an orthonormal basis of eigenvectors. We let $\{ \nu_1, \ldots, \nu_{K_n} \}$ denote the eigenvectors that correspond to the smallest $K_n$ eigenvalues of $L$. We define the normalized Laplacian embedding as the map $\Phi_{\nu}: \{X_1,\ldots,X_p\} \rightarrow \RR^{K_n}$ given by 
\begin{equation} \label{embedding}
\Phi_{\nu}(X_l) = (\nu_{l,1},\ldots,\nu_{l,K_n}).
\end{equation}

Typical spectral clustering methods generally proceed through the following steps. First, the normalized Laplacian is computed and each vertex $X_l,\, l=1,\ldots,p,$ is mapped to a $K_n$-dimensional vector through the above Laplacian embedding. Then, a standard clustering method is applied to the embedded data points. The Laplacian embedding helps to reveal the cluster structure in the data. A formal proof of this phenomenon has been studied recently; see \cite{schiebinger2014geometry}. Thus, if the eigenvalues and eigenvectors of the true graph Laplacian, in other words, the population version of the graph Laplacian, denoted by $\mathcal{L}$, are well approximated by that of the graph Laplacian $L$ obtained from the data set, then Proposition 1.1 ensures that the connected components are revealed by the eigenvalues of $L$, and the corresponding Laplacian embedding helps to find the connected components, hence, the clusters. Accurate approximations of the eigenvectors are ensured under certain conditions on the eigenspace of the underlying matrices, provided $L$ is close to $\mathcal{L}$ in certain matrix norms, such as the operator norm or Frobenius norm (see Section~\ref{Sec:consistency}).

\section{Construction of adjacency matrices}
\label{Sec:Weighted adjacency}

One of the primary steps for graph clustering is the construction of a suitable graph Laplacian for accurate identification of the underlying cluster allocations. Clustering based on Laplacians is closely related to a graph cutting algorithm, which forms partitions on the vertex set in such a way that within-partition edge weights are higher than between-partition edge weights. This requires efficiently and accurately defining the edge weights so as to incorporate this property in the ensuing graph. We focus on Gaussian graphical models in this paper, since they provide a coherent probability model on the space of graphs and are computationally tractable. Specifically, we assume a $p$-dimensional Gaussian random variable $\bX = (X_1,\ldots,X_p)$ having mean $\bm{0}$ and covariance $\Sigma$, where  the precision matrix is given by $\Omega = \Sigma^{-1}$. The corresponding Gaussian graphical model can then be denoted by $\mathcal{G} = (V,E)$, with $V = \{X_1,\ldots,X_p\}$ as the vertex set with an edge present between $X_l$ and $X_j$ if and only of $X_l$ and $X_j$ are conditionally independent given the rest. As before, the absence of an edge is reflected in the entries of $\Omega = (\!(\omega_{lj})\!)$, such that $\omega_{lj} = 0$ if and only if $(l,j) \notin E$.  Thus, the presence or absence of an edge between two variables is reflected in the inverse covariance matrix, and hence in the partial correlation matrix. We can take the absolute value of the partial correlation between $X_l$ and $X_j$ as the edge-weight between $(X_l,X_j)$.

One may also work with an unweighted adjacency matrix that consists only of binary ($0-1$) entries, which might be less efficient in some situations. For example, in cases where the partial correlation between two variables is near zero, assigning an edge between those two does not adequately account for the strength of association between those variables. To overcome this, we use the absolute value of the estimated partial correlations as the edge weights to construct the weighted adjacency matrix $W$ and then derive the graph Laplacian. Estimation of the partial correlations can be performed using empirical methods for low-dimensional problems. For high-dimensional problems such as typical datasets arising from genomic and imaging applications, where $ p > n$, empirical estimates are often unstable, if not estimable. To overcome this instability, sparse methods have been proposed for learning the covariance or inverse covariance matrix, as stated in Section \ref{Sec:Introduction}. In our framework, we want to incorporate the model uncertainty of the graph estimation in the clustering, for which we resort to Bayesian techniques for estimating the precision matrix. Classical variable selection methods typically do not incorporate model uncertainty, as opposed to Bayesian methods, which are capable of providing such measures by using posterior probabilities of models, as detailed below.

\noindent \underline{Bayesian neighborhood selection}:  In this paper, we consider a neighborhood selection approach that uses Bayesian model averaging. Consider $p$ regression models $X_l = \sum_{j \neq l} \beta_{l(j)} X_j + \epsilon_l$,\, $l = 1,\ldots,p$, where $\epsilon_l$ is a zero-mean Gaussian error with variance $\sigma_l^2$ corresponding to the regression of $X_l$ on the rest of the variables. The coefficient estimates $\beta_{l(j)}$ and $\beta_{j(l)}$ obtained from regressing $X_l$ on the rest of the variables and $X_j$ on the rest of the variables can be used to obtain an estimate of the partial correlation between $X_l$ and $X_j$. Note that $\beta_{l(j)} = - \omega_{lj}/\omega_{ll}$ and $\beta_{j(l)} = -\omega_{jl}/\omega_{jj}$. The partial correlation between $X_l$ and $X_j$ is given by $r^{lj} = -\omega_{lj}/(\omega_{ll}\omega_{jj})^{1/2}$, so that $r^{lj}$ may be written as $r^{lj} = \mathrm{sign}(\beta_{l(j)})\sqrt{\beta_{l(j)}\beta_{j(l)}}$. 

We describe the Bayesian model formulation for the above $p$ regression models. We have a $(p-1)$-dimensional regression model in each of the cases with response $X_l$ and regressors $ X_{-l}  := \{X_1,\ldots,X_p\} \backslash \{X_l\},\, l = 1,\ldots, p$. For each of the $p$ models, we define the $(p-1)$-dimensional indicator vector $\bgamma_l = \{\gamma_{l(j)}\}_{j \neq l}$, such that $\gamma_{l(j)} = 1$ if $X_j$ is included in the model and $\gamma_{l(j)} = 0$ otherwise, $l = 1,\ldots,p$. For $l = 1,\ldots, p$, the regression coefficient vector for the regressing $X_l$ on the rest is denoted as $\bm{\beta_l} = \{\beta_{l(j)}\}_{j \neq l}$.

For $l = 1,\ldots, p$, we specify the prior distribution on the model parameters for the $l^\mathrm{th}$ regression model in a hierarchical manner as,
\begin{equation*}
p(\bm{\beta}_{l}, \bgamma_l, \sigma_l^2) = p(\bm{\beta}_{l} \mid  \bgamma_l, \sigma^2)p(\bgamma_l) p(\sigma_l^2).
\end{equation*}
The choice of a prior distribution on the regression coefficients plays an important role in model assessment. In many applications, independent priors are put on the coefficients as a mixture of two components conditioned on $\bgamma_l$, given by
\begin{equation*}
p\{\beta_{l(j)}\} = \{1 - \gamma_{l(j)}\} p^{(0)}\{\beta_{l(j)}\} + \gamma_{(j)} p^{(1)}\{\beta_{l(j)}\}.
\end{equation*}
The `spike and slab' prior \citep{mitchell1988bayesian} chooses the mixture components as $p^{(0)}(\beta_{l(j)}) = \Ind_{\{0\}}(\beta_{l(j)})$ and $p^{(1)}(\beta_{l(j)}) = \Ind_{[-a,a]}(\beta_{l(j)})/2a$ for some $a>0$. \cite{george1993variable} used a different formulation of spike and slab models, assuming that $\bm{\beta}$ has a multivariate Gaussian scale mixture distribution defined by choosing $p^{(0)}(\beta_{l(j)}) = \mathrm{N}(0, \tau_{l(j)}^2)$ and $p^{(1)}(\beta_{l(j)}) = \mathrm{N}(0, c_{l(j)}^2\tau_{l(j)}^2)$. The constant $\tau_{l(j)}^2$ is chosen to be very small so that if $\gamma_{l(j)} = 0$, then $\beta_{l(j)}$ has very negligible variance and may be dropped from the selected model. The constant $c_{l(j)}^2$ is chosen to be large so that if $\gamma_{l(j)} = 1$, then $\beta_{l(j)}$ is included in the final model. The prior for $\gamma_{l(j)}$ is modeled as 
\begin{equation*}
\gamma_{l(j)} \mid u \stackrel{iid}{\sim} (1 - u_{l(j)})\delta_0(\cdot) + u_{l(j)} \delta_1(\cdot), \, u_{l(j)} \stackrel{iid}{\sim} \mathrm{U}(0,1),\, j \neq l.
\end{equation*}
In practice, the selection of the hyperparameters $\tau_{l(j)}^2, c_{l(j)}^2$ and $u_{l(j)}$ may be difficult. Ishwaran and Rao (2000) introduced continuous bimodal priors leading to the hierarchical prior specification as $\beta_{l(j)} \mid \gamma_j, \tau_{l(j)}^2 \stackrel{indep.}{\sim} \mathrm{N}(0,\gamma_{l(j)} \tau_{l(j)}^2), \gamma_{l(j)} \mid u_{l(j)} \stackrel{iid}{\sim} (1 - u_{l(j)})\delta_0(\cdot) + u_{l(j)} \delta_1(\cdot), \tau_{l(j)}^{-2} \mid a_1, a_2 \stackrel{iid}{\sim} \mathrm{Gamma}(a_1,a_2), u_{l(j)} \stackrel{iid}{\sim} \mathrm{U}(0,1).$

The hyperparameters $a_1$ and $a_2$ are chosen so that $\gamma_{l(j)} \tau_{l(j)}^2$ has a continuous bimodal distribution with spike at $0$ and a right continuous tail. However, the effect of the prior vanishes with increasing sample size $n$, for which a rescaled version of the above spike and slab formulation \citep{ishwaran2005spike} was proposed by rescaling the responses by a $\sqrt{n}$-factor , so that $X_{l,i}^* = \sqrt{n}X_{l,i}$, and using a variance inflation factor in the variance of the responses to account for the rescaling. The rescaled spike and slab model is especially useful in high dimensions, and is specified in a hierarchical manner (for the $l$th regression) as,
\begin{eqnarray}
X_{l,i}^* \mid X_{-l,i}, \bm{\beta_l}, \sigma^2 &\stackrel{indep.}{\sim}& \mathrm{N}(X_{-l,i}'\bm{\beta_l}, \sigma_l^2n), \nonumber \\
\beta_{l(j)} \mid \gamma_{l(j)}, \tau_{l(j)}^2 &\stackrel{indep.}{\sim}& \mathrm{N}(0,\gamma_{l(j)} \tau_{l(j)}^2), \nonumber \\
\gamma_{l(j)} \mid u_{l(j)} &\stackrel{iid}{\sim}& (1 - u_{l(j)})\delta_0(\cdot) + u_{l(j)} \delta_1(\cdot),\nonumber \\
\tau_{l(j)}^{-2} \mid a_1, a_2 &\stackrel{iid}{\sim}& \mathrm{Gamma}(a_1,a_2), \nonumber \\
u_{l(j)} &\stackrel{iid}{\sim}& \mathrm{U}(0,1), \nonumber \\
\sigma_l^{-2} &\sim& \mathrm{Gamma}(b_1,b_2).
\end{eqnarray}

\cite{ishwaran2011consistency} proved strong consistency of the posterior mean of the regression coefficients under the rescaled model. We adopt  the fast Gibbs sampling schemes proposed by \cite{ishwaran2005spike} which is detailed in the Supplementary Material (Section S2.1).

We put a rescaled spike and slab prior on the regression coefficients $\beta_{lj}$ for the $p$ regressions in our context, and obtain parameter estimates using Bayesian model averaging. The estimated parameters are then used to obtain the estimated partial correlation matrix $R = (\!(r^{lj})\!)$, using the relationship between the partial correlations and the regression parameters mentioned above. The partial correlation matrix is then used to construct the weighted adjacency matrix $W$ that corresponds to the graphical model. We use Bayesian model averaged estimates of the regression coefficients so that the model uncertainties are propagated while estimating the partial correlation matrix, which are subsequently used to calculate the Laplacian embedding using equation (\ref{embedding}).

\section{Nonparametric graph clustering using Laplacian embeddings}
\label{Sec:Nonparam graph clust}

\subsection{Dirichlet process mixture models}

Spectral clustering methods apply standard clustering techniques to the embedded data points with  a priori fixed number of clusters. A practical way to choose the number of (unknown) clusters $k$ is to identify the $k$ smallest eigenvalues of the graph Laplacian $L$ and then perform standard algorithmic clustering methods such as the k-means method. The major drawback in this regard is the requirement to prespecify the number of clusters, and also that the k-means algorithm tends to select clusters with nearly equal cluster sizes (as we demonstrate in simulations). To avoid such a situation, we propose to use Dirichlet process mixture models (DPMMs), which offer several computational and inferential advantages. We model the embedded data points obtained from the estimated graph Laplacian (as mentioned in the previous section) using a DPMM. 

Specifically, we denote the embedded data points for variable $X_l$ by $\by_l = \Phi_\nu(X_l),\, l = 1,\ldots,p$, where the dimension of the embedding (denoted by $K_n$ earlier) is chosen to be the number of infinitesimally small eigenvalues of the estimated graph Laplacian. Note that, in a finite Gaussian mixture model, the data are assumed to arise from the distribution
\begin{equation*}
p(\by) = \sum_{c=1}^{C} \mathrm{N}(\by \mid \mu_c, \Sigma_c),
\end{equation*} 
where $\pi_c$ are the mixing parameters for $C$ fixed components, and $\mu_c, \Sigma_c$ are the mean and covariance of the corresponding Gaussian mixture components. We further assume that the covariances $\Sigma_c$ are fixed to be $\sigma I$ for all $c = 1,\ldots,C$, $\sigma > 0$. A standard Bayesian approach for inference in the above set-up can be described by the following hierarchical representation, 
\begin{eqnarray*}
\pi_1,\ldots,\pi_C \mid \alpha_0 &\sim& \mathrm{Dir}\left(\alpha_0/C, \ldots, \alpha_0/C\right),\\
z_1,\ldots,z_p \mid \pi_1,\ldots,\pi_C &\sim& \mathrm{Discrete}(\pi_1,\ldots,\pi_C),\\
\mu_c \mid \rho &\sim& G_0(\rho),\\
\by_l \mid z_l, \{\mu_c\}_{c=1}^C & \sim & \mathrm{N}(\mu_{z_l}, \sigma I),
\end{eqnarray*}
for some suitable prior distribution $G_0$. Since the covariances $\Sigma_c$ are fixed, a prior distribution over the means may be chosen to be $\mathrm{N}(\bm{0}, \rho I)$, so that conditional distributions of the parameters may be obtained in a closed form for Gibbs sampling. In the above model, $z_l \in \{1,\ldots,C\},\, l =1,\ldots,p$ serves as the indicator variable for data point $\by_l$ for a specific cluster. One of the $C$ clusters is first chosen according to the multinomial distribution parameterized by $\pi_1,\ldots,\pi_C$, followed by sampling from the corresponding Gaussian distribution parameterized by $\mu_{z_i}$. The mixture weights follow a symmetric Dirichlet distribution with hyperparameter $\alpha_0$. A DPMM is obtained from the same generating process by letting $C \to \infty$, and replacing the Dirichlet prior for $\pi_1,\ldots,\pi_C$ by using a stick-breaking construction \citep{sethuraman}, 
\begin{eqnarray*}
\beta_j &\sim& \mathrm{Beta}(1,\alpha_0),\\
\pi_j &=& \beta_j \prod_{l=1}^{j-1}(1 - \beta_l).
\end{eqnarray*}
The corresponding sequence $\{\pi_j\}_{j=1}^{\infty}$ satisfies $\sum_{j=1}^{\infty} \pi_j = 1$ with probability one.

DPMMs have the intrinsic property of clustering the data so that hard clusterings may be obtained by simulating from the posterior. A concise schematic representation of our  Bayesian nonparametric graph clustering model in shown in Figure \ref{fig:dagmodel}.

\subsection{Posterior computations}

\subsubsection{MCMC}

Conditional posterior distributions of the cluster indicators are utilized for Gibbs sampling, looping through each data point $\by_l, \, l= 1,\ldots,p.$ Data point $\by_l$ is assigned to an existing cluster $c$ with probability $n_{c,-l}\mathrm{N}(\by_l \mid \mu_c, \sigma I)$, where $n_{c,-l}$ is the number of data points in cluster $c$, except $\by_l$. A new cluster is started with probability proportional to $\alpha_0 \int \mathrm{N}(\by_l \mid \mu,\sigma I) dG_0(\mu).$ The conditional posterior distributions of the means are also obtained and used for Gibbs updating given all the data points in a particular cluster, after resampling all the clusters. Detailed procedures for using the Gibbs sampler are discussed in \cite{west1994hierarchical} and \cite{neal2000markov}. For completeness, we present the details of the MCMC procedure in the Supplementary Material Section S2.2.

\begin{figure}
\includegraphics[scale=1.4]{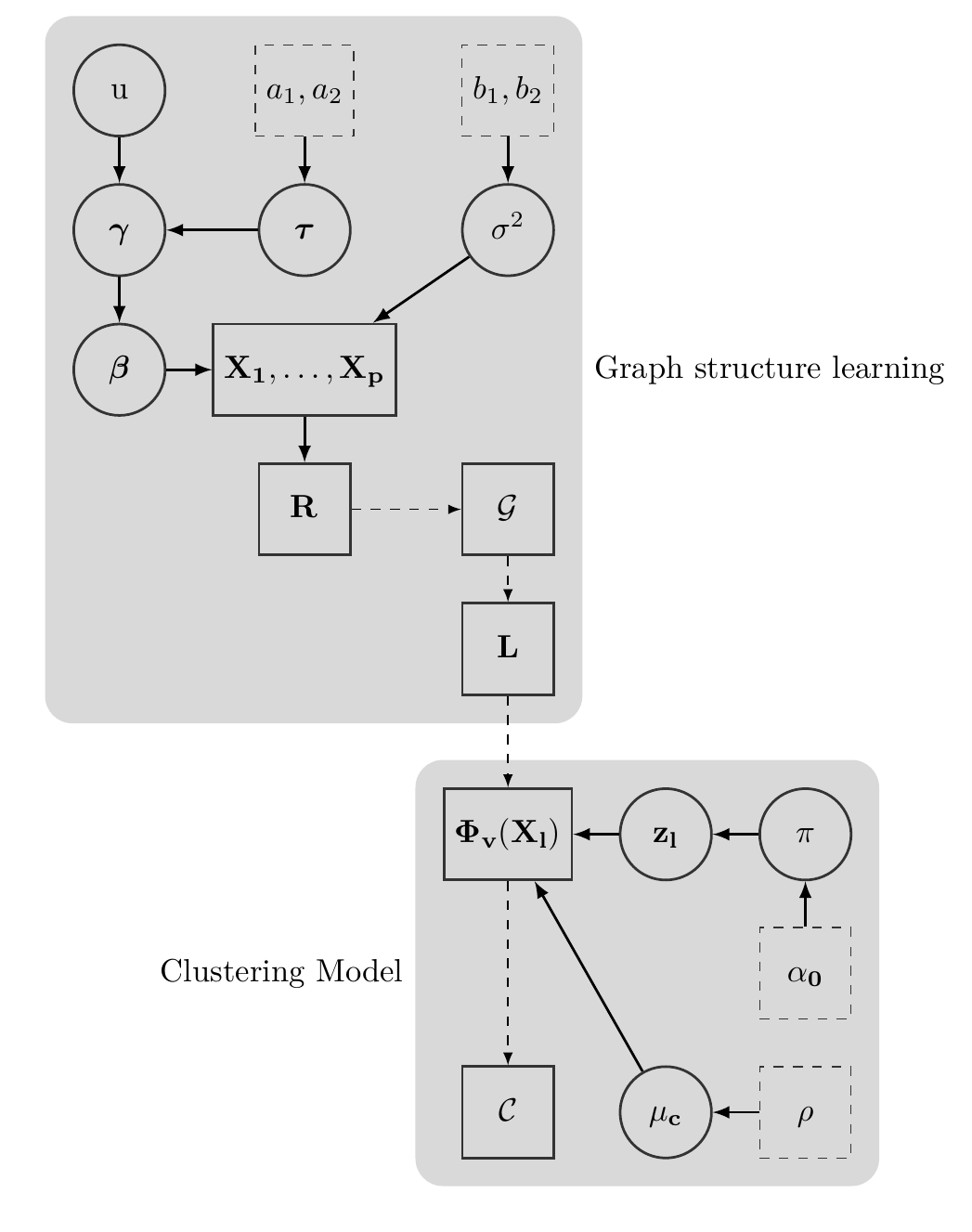}
\caption{Directed acyclic graphical representation of the Bayesian nonparametric graph clustering model. Circles represent stochastic model parameters, solid rectangles represent data and deterministic variables, dashed rectangles represent model constants, and solid and dashed arrows respectively represent stochastic and deterministic relationships.}
\label{fig:dagmodel}
\end{figure}

\subsubsection{Hard clustering via DP-means}

While MCMC approaches have benefits, one of the major drawbacks is their scalability for large $p$. Also, since our primary inference is on cluster indices for the $p$ variables, per se, rather than parameter estimates, we propose to utilize a fast computing method that evaluates the cluster indices corresponding to the above model formulation. \cite{kulis2012revisiting} considered an asymptotic version of the DPMM and developed an algorithm that gives the cluster assignments of the data points along with the cluster centers, without using the MCMC approach. Their method is closely related to the standard k-means algorithm, but with an additional penalty parameter on the number of clusters in the objective function. Their method, called \emph{DP-means}, is based on the Gibbs sampler algorithm for the DPMM, and is derived using small variance asymptotics. 

Under the previously mentioned assumptions that the covariances $\Sigma_c$ are fixed to be $\sigma I$, and the prior distribution $G_0$ over the means is taken to be $\mathrm{N}(\bm{0}, \rho I)$, for some $\rho >0$, the Gibbs probabilities can be computed explicitly. To implement DP-means, the base parameter $\alpha_0$ is further assumed to be functionally dependent on $\sigma$ and $\rho$ as $\alpha_0 = (1 + \rho / \sigma)^{K_n/2} \exp(-\lambda/2\sigma)$, for some $\lambda > 0$. The probability of assigning the data point $l$ to cluster $c$ is then proportional to 
\begin{equation}
n_{c,-l} \exp\left(-\frac{1}{2\sigma}\| \by_l - \mu_c\|_2^2 \right),
\end{equation}
where $n_{c,-l}$ denotes the number of data points already in cluster $c$, as introduced previously. Also, the probability for a new cluster is proportional to
\begin{equation}
\exp\left\{-\frac{1}{2\sigma}\left( \lambda + \frac{\sigma}{\rho + \sigma} \| \by_l \|_2^2 \right) \right\}.
\end{equation}
Then, as $\sigma \rightarrow 0$ for a fixed $\rho$, the probability of data point $i$ being assigned to cluster $c$ goes to $1$ when $\mu_c$ is closest to $\by_l$. One major advantage of using this procedure is scalability, especially in higher dimensions, where MCMC mixing might be slow. The DP-means algorithm performs similarly to MCMC-based methods, but saves time. The DP-means algorithm minimizes the objective function,
\begin{equation}
\sum_{c=1}^k \sum_{\by \in l_c} \|\by - \theta_c\|^2 + \lambda k,
\end{equation}
with $\theta_c = \frac{1}{l_c} \sum_{\by \in l_c} \by $, $l_1, \ldots, l_k$ being the resulting $k$ clusters. The penalty parameter $\lambda$ controls the number of clusters.

\section{Consistency of the graph Laplacian}
\label{Sec:consistency}

One of the main contributions of this paper is to establish theoretical guarantees of our modeling endeavor. A critical assumption in this regard is that the eigenspace of the estimated graph Laplacian should well approximate the true eigenspace corresponding to the true graph Laplacian obtained from the true partial correlation matrix. We establish this using two steps. In the first step, we show that the operator norm of the difference between these two Laplacians is bounded and it is subsequently argued that the eigenvectors corresponding to the smallest eigenvalues chosen for embedding are close to their true counterparts. This ensures that the  connected components in the graph are identified accurately.

We first define the notations to be used for the main theoretical results in this paper.

By $r_n = O(\delta_n)$ (respectively, $o(\delta_n)$), we  mean that $r_n/\delta_n$ is bounded (respectively, $r_n/\delta_n \rightarrow 0$ as $n \to \infty$). For a random sequence $X_n$, $X_n = O_P(\delta_n)$ (respectively, $X_n = o_P(\delta_n)$) means that $\mathrm{Pr}(|X_n| \leq M\delta_n) \rightarrow 1$ for some constant $M$ (respectively, $\mathrm{Pr}(|X_n| < \epsilon\delta_n) \rightarrow 1$ for all $\epsilon > 0$).  

For a vector $\bx \in \RR^p$, we define the following norms:
\begin{equation*}
\|\bx\|_r = \left(\sum_{j=1}^p|x_{j}|^r\right)^{1/r},  \; 1\le r<\infty, \quad \|\bx\|_\infty = \mathop{\max }_{j}|x_{j}|. 
\end{equation*}

If $A$ is a symmetric $p\times p$ matrix, let $\eig_1(A) \le \cdots \le \eig_p(A)$ stand for its ordered eigenvalues. Viewing $A$ as a vector in $\RR^{p^2}$ and an operator from $(\RR^p,\|\cdot\|_r)$ to $(\RR^p, \|\cdot\|_s)$, where $1\le r,s \le \infty$, we have the following norms on $p\times p$ matrices:
\begin{eqnarray*}
&\|A\|_r =\left(\sum_{i=1}^p |a_{ij}|^r\right)^{1/r}, \; 1\le r<\infty, \quad \|A\|_\infty = \mathop{\max }_{i,j}|a_{ij}|,  &\nonumber \\
&\|A\|_{(r,s)} = \mbox{sup}\{\|A\bx\|_s:\|\bx\|_r = 1\}. &\nonumber
\end{eqnarray*}
The norms $\|\cdot\|_r$ and $\|\cdot\|_{(r,r)}$ are referred to as the $L_r$-norm and the $L_r$-operator norm, respectively.
Thus, we obtain the Frobenius norm as the $L_2$-norm given by $\|A\|_2 = \sqrt{\mathrm{tr}(A^TA)}$. Also, 
\begin{eqnarray*}
&\|A\|_{(1,1)} = \mathop{\max }_{j} \sum_i |a_{ij}|, \quad \|A\|_{(\infty,\infty)} = \mathop{\max }_{i} \sum_j |a_{ij}|, &\nonumber \\
&\|A\|_{(2,2)} = \{\max(\eig_i(A^TA):1\le i\le p)\}^{1/2}, \nonumber
\end{eqnarray*}
and for symmetric matrices, $\|A\|_{(2,2)}= \max\{|\eig_i(A)|:1\le i\le p\}$, and $\|A\|_{(1,1)} = \|A\|_{(\infty,\infty)}$.

\subsection{Determining bounds for the estimated graph Laplacian}
We first present the results in which we show that the operator norm of the difference between the estimated graph Laplacian and the true graph Laplacian is bounded, which implies the closeness of the respective eigenvalues so that the dimensions of the embedded data points are identical for the estimated and true Laplacian embeddings. 
\begin{theorem}
Consider the normalized graph Laplacian $L = I -  D^{-1/2} W D^{-1/2}$ based on the estimated weighted adjacency matrix $W$ and the true graph Laplacian $\mcL = I - \mcD^{-1/2} \mcW \mcD^{-1/2}$, where $\mathcal{D}$ and $\mathcal{W}$, respectively, are the true degree matrix and weighted adjacency matrix. Then, under the assumptions that the minimum degree is bounded below by $p^{1/2}(\log p)^{-1/2}$ and the maximum degree is bounded above by $p^\kappa, 1/2 \leq \kappa < 1$, we have
\begin{equation*}
\| L - \mathcal{L} \|_{(2,2)} \leq (\log p)^{3/4} p^{\kappa - 1/4} \| W - \mathcal{W} \|_2.
\end{equation*}
\label{theorem:Lapbound}
\end{theorem}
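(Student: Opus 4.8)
The plan is to bound $\|L - \mathcal{L}\|_{(2,2)}$ by a telescoping decomposition that separates the perturbation of the weighted adjacency matrix from the perturbation of the degree normalization. Writing the difference of the two normalized Laplacians as $L - \mathcal{L} = \mathcal{D}^{-1/2}\mathcal{W}\mathcal{D}^{-1/2} - D^{-1/2}WD^{-1/2}$, I would insert the two hybrid products $\mathcal{D}^{-1/2}W\mathcal{D}^{-1/2}$ and $D^{-1/2}W\mathcal{D}^{-1/2}$ to obtain
\[
L - \mathcal{L} = \mathcal{D}^{-1/2}(\mathcal{W}-W)\mathcal{D}^{-1/2} + (\mathcal{D}^{-1/2}-D^{-1/2})W\mathcal{D}^{-1/2} + D^{-1/2}W(\mathcal{D}^{-1/2}-D^{-1/2}).
\]
Each of the three summands is then controlled in the $(2,2)$-operator norm by submultiplicativity, $\|ABC\|_{(2,2)} \le \|A\|_{(2,2)}\|B\|_{(2,2)}\|C\|_{(2,2)}$, which reduces the problem to three elementary factor estimates.

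The first factor estimate concerns the diagonal matrices $D^{-1/2}$ and $\mathcal{D}^{-1/2}$. Since the minimum degree is bounded below by $p^{1/2}(\log p)^{-1/2}$, each diagonal entry of $\mathcal{D}^{-1/2}$ (and of $D^{-1/2}$) is at most $p^{-1/4}(\log p)^{1/4}$, whence $\|\mathcal{D}^{-1/2}\|_{(2,2)} = \|D^{-1/2}\|_{(2,2)} \le p^{-1/4}(\log p)^{1/4}$. The second estimate bounds $\|W\|_{(2,2)}$: because $W$ is symmetric and nonnegative, the norm facts recorded before the theorem give $\|W\|_{(2,2)} \le \|W\|_{(\infty,\infty)} = \max_l \sum_j w_{lj} = \max_l d_l$, which the maximum-degree hypothesis caps at $p^\kappa$. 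The third and most delicate estimate controls the diagonal matrix $\mathcal{D}^{-1/2} - D^{-1/2}$. For each $l$, I would apply the mean value theorem to $x \mapsto x^{-1/2}$ on the segment between the true degree $\mathcal{D}_{ll}$ and the estimated degree $d_l$; since both exceed $d_{\min} \ge p^{1/2}(\log p)^{-1/2}$, so does the intermediate point, giving $|\mathcal{D}_{ll}^{-1/2} - d_l^{-1/2}| \le \tfrac12 d_{\min}^{-3/2}|\mathcal{D}_{ll} - d_l|$. The degree discrepancy is then $|\mathcal{D}_{ll} - d_l| = \bigl|\sum_j (\mathcal{W}_{lj} - w_{lj})\bigr| \le \sqrt{p}\,\|W - \mathcal{W}\|_2$ by Cauchy--Schwarz applied to the $l$th row, so $\|\mathcal{D}^{-1/2} - D^{-1/2}\|_{(2,2)} \le \tfrac12 d_{\min}^{-3/2}\sqrt{p}\,\|W-\mathcal{W}\|_2 = \tfrac12 p^{-1/4}(\log p)^{3/4}\|W-\mathcal{W}\|_2$.

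Assembling the three bounds, the first summand contributes $d_{\min}^{-1}\|W-\mathcal{W}\|_2 = p^{-1/2}(\log p)^{1/2}\|W-\mathcal{W}\|_2$, while each of the other two contributes the dominant order $\tfrac12 p^{\kappa - 1/2}(\log p)\|W - \mathcal{W}\|_2$, obtained by multiplying the factor bounds $p^{-1/4}(\log p)^{1/4}$, $p^\kappa$, and $\tfrac12 p^{-1/4}(\log p)^{3/4}$. Since $\kappa \ge 1/2$ the degree-perturbation terms dominate, and because $p^{\kappa-1/2}(\log p) = p^{-1/4}(\log p)^{1/4}\cdot p^{\kappa-1/4}(\log p)^{3/4} \le p^{\kappa-1/4}(\log p)^{3/4}$ for $p$ large, the total is bounded by $(\log p)^{3/4}p^{\kappa-1/4}\|W-\mathcal{W}\|_2$, as claimed.

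I expect the main obstacle to be the third factor estimate, i.e.\ controlling $\|\mathcal{D}^{-1/2} - D^{-1/2}\|_{(2,2)}$. Two points require care. First, the passage from the $\ell_1$ row discrepancy $|\mathcal{D}_{ll} - d_l|$ to the Frobenius norm $\|W-\mathcal{W}\|_2$ costs a factor $\sqrt{p}$ through Cauchy--Schwarz, and it is precisely the interplay of this $\sqrt{p}$ with the factor $d_{\min}^{-3/2} = p^{-3/4}(\log p)^{3/4}$ that produces the exponent $\kappa - 1/4$ rather than $\kappa - 1/2$; getting these powers to align with the stated bound is the crux of the computation. Second, the mean value theorem step is legitimate only because the minimum-degree hypothesis applies to \emph{both} $D$ and $\mathcal{D}$, keeping the intermediate point bounded away from zero, so this is exactly where the lower degree bound is essential.
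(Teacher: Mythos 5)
Your proof is correct and follows essentially the same route as the paper's: the identical three-term telescoping decomposition of $D^{-1/2}WD^{-1/2}-\mathcal{D}^{-1/2}\mathcal{W}\mathcal{D}^{-1/2}$, the bound $\|\mathcal{W}\|_{(2,2)}\leq \|\mathcal{W}\|_{(\infty,\infty)}\leq p^{\kappa}$ via the maximum-degree hypothesis, and the $(\log p/p)^{1/2}$ control of the inverse degree matrices via the minimum-degree hypothesis. The only substantive difference is in controlling the diagonal perturbation: you apply the mean value theorem entrywise to $x\mapsto x^{-1/2}$, reading the minimum-degree bound as holding for both $D$ and $\mathcal{D}$, whereas the paper assumes it only for the true $\mathcal{D}$, transfers it to $D$ through the identity $D^{-1}-\mathcal{D}^{-1}=D^{-1}(\mathcal{D}-D)\mathcal{D}^{-1}$ under the additional implicit condition $(\log p)^{1/2}\|W-\mathcal{W}\|_2=o_P(1)$, and then bounds $\|D^{-1/2}-\mathcal{D}^{-1/2}\|_{(2,2)}$ via $\|D^{1/2}-\mathcal{D}^{1/2}\|_2\leq p^{1/2}\|W-\mathcal{W}\|_2$; your version yields dominant terms of order $p^{\kappa-1/2}\log p$ rather than $p^{\kappa-1/4}(\log p)^{3/4}$, i.e.\ tighter by a factor $p^{-1/4}(\log p)^{1/4}$, which also lets you absorb the sum of the three contributions into the stated bound for large $p$ without the multiplicative constant that the paper silently drops.
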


\begin{remark}
In the present context of clustering, we use the absolute value of the estimated partial correlation matrix as the weighted adjacency matrix $W$. The corresponding true weighted adjacency matrix $\mathcal{W}$ is the true partial correlation matrix $\mathcal{R}$. We have proposed to use a Bayesian neighborhood selection approach to estimate the partial correlation matrix. Consistency of the regression parameters would lead to the posterior consistency of the partial correlation matrix as well. Note that $\| W - \mathcal{W} \|_2 \leq \| R - \mathcal{R} \|_2$; hence, the posterior consistency of $R$ leads to the posterior consistency of the estimated graph Laplacian $L$. In fact, $\| L - \mathcal{L} \|_{(2,2)} = o_P(1)$ if $(\log p)^{3/4} p^{\kappa - 1/4}  \| R - \mathcal{R} \|_2 = o_P(1)$. 
\end{remark}

We present a proof of the above result in the Appendix. The bounds given above involve the difference between the estimated and true partial correlation matrices. Hence, the consistency of the graph Laplacian is dependent on the consistency of the partial correlation matrix. Apart from the Bayesian neighborhood selection approach in our paper, other consistent estimators of the partial correlation matrix or precision matrix would lead to the consistency of the graph Laplacian as well. This includes Bayesian estimation using Wishart priors, which are known to be consistent in the operator norm. 

Now we argue that the dimension of the embedded data points is identical to that obtained using the true Laplacian embedding. To show this, it suffices to argue that the eigenvalues of the estimated graph Laplacian and those of the true graph Laplacian are sufficiently close. By Weyl's inequality, we have
\begin{equation*}
\max | \eig_i(L) - \eig_i(\mcL) | \leq \| L - \mcL \|_{(2,2)} = o_P(1).
\end{equation*}
Thus, the eigenvalues of $L$ are close to those of $\mcL$. As long as the eigen-gap (denoted by $\delta$, say) between the first non-zero population eigenvalue and the corresponding zero eigenvalue(s) is large, the number of zero eigenvalues of $L$ (or eigenvalues within a distance $\epsilon_n$ from zero, where $\epsilon_n \rightarrow 0$ is the rate of contraction of $L$, dependent on the data) is equal to that of $\mcL$. In fact, the condition that $\delta > 2\epsilon_n$ suffices in this context for the dimensions of the embedded spaces to be identical.

\subsection{Closeness of the eigenvectors}

We now argue that the eigenvectors corresponding to these zero eigenvalues are also close to their population counterparts. We are dealing with orthonormal vectors (so that both are of equal magnitude), and hence evaluating the closeness of two such vectors reduces to determining the bounds of the principal angle between them. The theorem below presents the result on these bounds. The result involves the $d$ zero eigenvalues of the true graph Laplacian $\mathcal{L}$ and the corresponding $\epsilon_n$-small eigenvalues of the estimated graph Laplacian $L$. 

\begin{theorem}
Let $V = (v_1,\ldots,v_d) \in \RR^{p \times d}$ be the eigenvectors corresponding to the $d$ zero eigenvalues of $\mcL$ and $\hat{V} = (\hat{v}_1, \ldots, \hat{v}_d) \in \RR^{p \times d}$ be the eigenvectors corresponding to the $d$ zero (or $\epsilon_n$ close) eigenvalues of $L$. Also, denote by $\delta$ the eigen-gap between the zero eigenvalues and the first non-zero eigenvalue of $\mathcal{L}$, and assume that $\delta > 2\epsilon_n$. Then the principal angle between $V$ and $\hat{V}$ can be bounded as
\begin{equation*}
\| \sin \Theta(\hat{V}, V) \| _ {(2,2)} \leq \frac{\| L - \mathcal{L} \| _{(2,2)}}{\delta}.
\end{equation*}
\end{theorem}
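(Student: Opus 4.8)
The plan is to recognize this statement as an instance of the Davis--Kahan $\sin\Theta$ theorem and to give a self-contained derivation through a Sylvester-equation argument. First I would fix orthonormal bases: let $V \in \RR^{p \times d}$ span the null space of $\mathcal{L}$, complete it with $V_\perp \in \RR^{p\times(p-d)}$ spanning the orthogonal complement, and invoke the standard identification $\|\sin\Theta(\hat V, V)\|_{(2,2)} = \|V_\perp^T \hat V\|_{(2,2)}$, so that it suffices to bound the operator norm of $S := V_\perp^T \hat V$. From the eigendecompositions I have $\mathcal{L} V = 0$ and $\mathcal{L} V_\perp = V_\perp M$, where $M$ is the diagonal matrix of the nonzero eigenvalues of $\mathcal{L}$, all at least $\delta$, while $L\hat V = \hat V \hat\Lambda$ with $\hat\Lambda$ diagonal and $\hat\Lambda \preceq \epsilon_n I$ (the $\epsilon_n$-small eigenvalues of $L$).

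The key step is to form the residual $V_\perp^T (\mathcal{L} - L)\hat V$ and expand it with the eigen-equations. Since $\mathcal{L}$ is symmetric, $V_\perp^T \mathcal{L} = M V_\perp^T$, whence
\begin{equation*}
V_\perp^T (\mathcal{L} - L)\hat V = M(V_\perp^T \hat V) - (V_\perp^T \hat V)\hat\Lambda = MS - S\hat\Lambda = T(S),
\end{equation*}
which exhibits the residual as the image of $S$ under the Sylvester operator $T(S) = MS - S\hat\Lambda$. Because $M$ and $\hat\Lambda$ are Hermitian with spectra in $[\delta,\infty)$ and $[0,\epsilon_n]$ respectively, the separation inequality for Sylvester operators (valid for every unitarily invariant norm, in particular the operator norm) gives $\|T(S)\|_{(2,2)} \ge \mathrm{sep}(M,\hat\Lambda)\,\|S\|_{(2,2)}$ with $\mathrm{sep}(M,\hat\Lambda) = \min_{i,j}|M_{ii}-\hat\Lambda_{jj}| \ge \delta - \epsilon_n$. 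Combining this with the trivial bound $\|V_\perp^T(\mathcal{L}-L)\hat V\|_{(2,2)} \le \|\mathcal{L}-L\|_{(2,2)}$ (the outer factors having orthonormal columns) yields
\begin{equation*}
\|\sin\Theta(\hat V, V)\|_{(2,2)} = \|S\|_{(2,2)} \le \frac{\|L-\mathcal{L}\|_{(2,2)}}{\delta - \epsilon_n}.
\end{equation*}
The hypothesis $\delta > 2\epsilon_n$ forces $\delta - \epsilon_n > \delta/2 > 0$, so the denominator is bounded away from zero and of order $\delta$, giving the stated bound; if the $d$ selected eigenvalues of $L$ are taken exactly at zero, the separation is exactly $\delta$ and one recovers the inequality as written.

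The main obstacle I anticipate is not the algebra above but the verification of the two structural facts on which it silently rests. First, I must confirm that $\hat V$ truly has the same dimension $d$ as $V$: this is where Theorem~\ref{theorem:Lapbound} enters, since Weyl's inequality together with $\|L-\mathcal{L}\|_{(2,2)} = o_P(1)$ and the eigen-gap condition $\delta > 2\epsilon_n$ guarantees that exactly $d$ eigenvalues of $L$ fall within $\epsilon_n$ of zero while the rest stay near or above $\delta$. Second, I must certify the genuine spectral separation between the $\epsilon_n$-cluster of $L$ and the $\geq\delta$ eigenvalues of $\mathcal{L}$, again a direct consequence of $\delta > 2\epsilon_n$; without it the Sylvester operator $T$ could fail to be invertible and the whole argument collapses. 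The subtle technical point in the derivation itself is that the separation inequality must be applied in the $(2,2)$ (operator) norm rather than the Frobenius norm, which is licensed by the fact that $M,\hat\Lambda$ are Hermitian and the inequality holds for all unitarily invariant norms; I would cite the standard Bhatia--Rosenthal bound for this step rather than re-deriving it.
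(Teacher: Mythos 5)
Your proof is correct, but it takes a genuinely different route from the paper. The paper proves this theorem essentially by citation: it restates the Yu--Wang--Samworth variant of the Davis--Kahan theorem (whose point is that the eigen-gap condition is imposed on the population matrix only) and applies it with $A=\mathcal{L}$, $\hat A = L$, $r=1$, $s=d$, so that the relevant gap is $\lambda_{d+1}-\lambda_d=\delta$; the hypothesis $\delta>2\epsilon_n$ is used only to guarantee, via Weyl's inequality and Theorem~5.1, that $L$ has exactly $d$ eigenvalues within $\epsilon_n$ of zero, so that $\hat V$ has the right dimension. You instead re-derive the classical Davis--Kahan bound from scratch via the Sylvester-operator separation argument, which additionally requires locating the spectrum of the perturbed matrix $L$ (your $\hat\Lambda\preceq\epsilon_n I$) --- available here precisely because of the same Weyl argument, as you correctly note. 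Your route buys self-containedness and makes explicit where $\epsilon_n$ enters; the paper's route buys brevity and a denominator depending only on the population gap. One quantitative caveat: your separation constant is $\delta-\epsilon_n$, so you literally establish $\| \sin\Theta(\hat V,V)\|_{(2,2)}\le \|L-\mathcal{L}\|_{(2,2)}/(\delta-\epsilon_n)\le 2\|L-\mathcal{L}\|_{(2,2)}/\delta$, i.e.\ the stated bound only up to a factor of at most $2$ (with equality to the stated form only if the selected eigenvalues of $L$ are exactly zero). This is harmless for the paper's purpose, since the bound is only used to conclude that the left side is $o_P(1)$; indeed the original Yu--Wang--Samworth statement carries the same factor of $2$ that the paper's restatement omits, so neither route actually delivers the constant-$1$ inequality exactly as written.
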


The result easily follows that presented by \cite{davis1970rotation}, which has been recently improved by \cite{samworth2014}, involving assumptions on population eigenvalues only. The improvement is significant in the sense that we only need to care about the eigen-gap in the true graph Laplacian $\mathcal{L}$, not in the estimated one. We present their result in the theorem below, followed by the proof of our result above.

\begin{theorem}[Modified version of Davis-Kahan theorem \citep{samworth2014}]
Let $A, \hat{A} \in \RR^{p \times p}$ be symmetric, with eigenvalues $\lambda_1 \leq \ldots \leq \lambda_p$ and $\hat{\lambda}_1 \leq \ldots \leq \hat{\lambda}_p$, respectively. Fix $1 \leq r \leq s \leq p$ and assume that $\delta = \min(\lambda_r - \lambda_{r-1}, \lambda_{s+1} - \lambda_s) > 0$, where $\lambda_{p+1} := \infty, \lambda_0 := - \infty$. Let $d := s-r+1$, and let $V = (v_r,v_{r+1},\ldots,v_s) \in \RR^{p \times d}$ and $\hat{V} = (\hat{v}_r, \hat{v}_{r+1}, \ldots, \hat{v}_s) \in \RR^{p \times d}$ have orthonormal columns satisfying $Av_j = \lambda_j v_j$  and $\hat{A} \hat{v}_j = \hat{\lambda_j} \hat{v}_j, \, j = r, r+1, \ldots, s.$ Then,
\begin{equation*}
\| \sin \Theta(\hat{V}, V) \| _ {(2,2)} \leq \frac{\| \hat{A} - A \| _{(2,2)}}{\delta}.
\end{equation*}
\end{theorem}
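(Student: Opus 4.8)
The statement is the Yu--Wang--Samworth variant of the Davis--Kahan $\sin\Theta$ theorem, so the plan is to reduce it to the classical $\sin\Theta$ bound (the preceding theorem) and then replace the separation appearing there, which ordinarily involves the \emph{perturbed} eigenvalues $\hat\lambda_j$, by the purely \emph{population} gap $\delta$. First I would use the projection characterization of principal angles: choosing $V_\perp \in \RR^{p\times(p-d)}$ so that $(V,V_\perp)$ is orthogonal, one has $\|\sin\Theta(\hat V,V)\|_{(2,2)} = \|V_\perp^\top \hat V\|_{(2,2)}$, since the singular values of $V_\perp^\top \hat V$ are the sines of the principal angles. Writing $E = \hat A - A$ and projecting the eigen-relations $A V_\perp = V_\perp \Lambda_2$, with $\Lambda_2 = \mathrm{diag}(\lambda_j : j \notin \{r,\dots,s\})$, and $\hat A \hat V = \hat V \hat\Lambda_1$, with $\hat\Lambda_1 = \mathrm{diag}(\hat\lambda_r,\dots,\hat\lambda_s)$, yields the Sylvester identity
\[
\Lambda_2\,(V_\perp^\top \hat V) - (V_\perp^\top \hat V)\,\hat\Lambda_1 = -\,V_\perp^\top E\,\hat V .
\]
Because the spectra of $\Lambda_2$ and $\hat\Lambda_1$ are separated by $\gamma := \min\{|\lambda_j - \hat\lambda_i| : j \notin \{r,\dots,s\},\, i \in \{r,\dots,s\}\}$, the standard solution bound for such Sylvester operators (equivalently, the classical Davis--Kahan theorem) gives $\|V_\perp^\top \hat V\|_{(2,2)} \le \|E\|_{(2,2)}/\gamma$.

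The heart of the argument is then to lower-bound the \emph{mixed} separation $\gamma$ by the population gap $\delta$. Here I would invoke Weyl's inequality $|\hat\lambda_i - \lambda_i| \le \|E\|_{(2,2)}$, which together with the index ordering and $\delta = \min(\lambda_r - \lambda_{r-1},\, \lambda_{s+1} - \lambda_s)$ gives $\gamma \ge \delta - \|E\|_{(2,2)}$. A two-case argument then removes the dependence on the $\hat\lambda_i$: if $\|E\|_{(2,2)} \ge \delta/2$, the target bound exceeds $1/2$ and one falls back on the trivial estimate $\|\sin\Theta(\hat V,V)\|_{(2,2)} \le 1$; otherwise $\gamma \ge \delta/2$ and the Sylvester bound applies directly. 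Combining the two cases delivers a bound of the form $\|\sin\Theta(\hat V,V)\|_{(2,2)} \le c\,\|E\|_{(2,2)}/\delta$ depending only on the spectrum of $A$, which is precisely the feature that makes this variant usable for us: we only ever need the eigen-gap of the \emph{true} Laplacian $\mathcal{L}$, never that of the estimated $L$.

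I expect the delicate point to be the constant $c$, and this is where I would be most careful. The robust route above produces $c = 2$, and this cannot be sharpened to $c = 1$ in general, even when the selected block is extremal: taking $A = \mathrm{diag}(0,1)$, $r=s=1$ (so $\delta = 1$), and $\hat A = \left(\begin{smallmatrix} 1/2 & \eta \\ \eta & 1/2 \end{smallmatrix}\right)$ with small $\eta>0$, the bottom eigenvector of $\hat A$ sits near $45^\circ$ to $v_1$, so $\|\sin\Theta(\hat V,V)\|_{(2,2)} = 1/\sqrt2 \approx 0.707$ while $\|\hat A - A\|_{(2,2)} = \sqrt{1/4+\eta^2} \to 1/2$, whence $\|\hat A - A\|_{(2,2)}/\delta \to 1/2 < 0.707$. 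The inequality as displayed is therefore best read with a factor $2$ on the right-hand side (or with $\delta$ replaced by the mixed separation $\gamma$, for which the classical theorem already gives constant $1$). For the use made of it here the value of $c$ is immaterial: in our setting $\|L-\mathcal{L}\|_{(2,2)} = o_P(1)$ (the Remark following Theorem~\ref{theorem:Lapbound}), so any fixed constant yields $\|\sin\Theta(\hat V,V)\|_{(2,2)} = o_P(1)$ and hence the claimed closeness of the eigenvectors.
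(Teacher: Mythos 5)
The paper offers no proof of this statement: it is quoted as a known result from Yu, Wang and Samworth and used as a black box, so there is nothing internal to compare your argument against. What you have written is precisely the standard proof of the cited result: pass to $\|\sin\Theta(\hat V,V)\|_{(2,2)} = \|V_\perp^\top \hat V\|_{(2,2)}$, derive the Sylvester identity $\Lambda_2(V_\perp^\top\hat V) - (V_\perp^\top\hat V)\hat\Lambda_1 = -V_\perp^\top E\hat V$, bound the Sylvester operator by the separation of the two spectra, and then use Weyl's inequality together with the two-case split ($\|E\|_{(2,2)}$ above or below $\delta/2$) to convert the mixed sample/population separation into the purely population gap $\delta$. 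One small point of care: the constant-one Sylvester bound requires the interval form of the separation (one spectrum inside an interval, the other outside its enlargement), not merely a pointwise minimum over pairs of eigenvalues; that form does hold here because the selected indices are contiguous, so Weyl confines $\hat\lambda_r,\dots,\hat\lambda_s$ to $[\lambda_r - \|E\|_{(2,2)},\, \lambda_s + \|E\|_{(2,2)}]$ while the complementary population eigenvalues avoid $(\lambda_r-\delta,\lambda_s+\delta)$. With that caveat your argument is correct.

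Your remark about the constant is the substantive catch, and you are right: as transcribed here the inequality omits the factor of $2$ that appears in the actual Yu--Wang--Samworth bound, and with constant $1$ the displayed inequality is false. Your counterexample is valid: for $A = \mathrm{diag}(0,1)$ and $r=s=1$ (so $\delta = 1$ and $v_1 = e_1$), taking $\hat A$ with diagonal entries $1/2$ and off-diagonal $\eta\in(0,1/2)$ makes the bottom eigenvector of $\hat A$ equal to $(1,-1)/\sqrt{2}$, giving $\|\sin\Theta(\hat V,V)\|_{(2,2)} = 1/\sqrt{2}$, whereas $\|\hat A - A\|_{(2,2)}/\delta = \sqrt{1/4+\eta^2} < 1/\sqrt{2}$. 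The display should therefore read $\|\sin\Theta(\hat V,V)\|_{(2,2)} \le 2\|\hat A - A\|_{(2,2)}/\delta$, and the same factor of $2$ should be carried into the preceding theorem bounding $\|\sin\Theta(\hat V,V)\|_{(2,2)}$ in terms of $\|L - \mathcal{L}\|_{(2,2)}/\delta$. As you observe, this is harmless for the paper's conclusions, since only $\|L - \mathcal{L}\|_{(2,2)} = o_P(1)$ is used downstream and any fixed constant preserves that rate.
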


In our context, we take the first $d$ zero eigenvalues of the graph Laplacian, so that $r = 1, s = d$ in the above theorem. The matrices $A$ and $\hat{A}$ are taken to be $\mcL$ and $L$, respectively, so that the sine of the principal angle between the eigenvectors is bounded above by the difference in the operator norm of the graph Laplacians. From Theorem \ref{theorem:Lapbound}, it follows that the $L_2$-operator norm of the sine of the principal angle between the eigenvectors goes to zero with high probability. 

Thus, in summary, we have proved that $L$ well approximates $\mathcal{L}$ with respect to the eigenspace, and hence in light of Proposition~\ref{Prop:Lap}, $L$ can be used as an effective tool to cluster the variables with graphical dependencies.
\section{Simulations}
\label{Sec:Simulation}

We perform simulation studies to evaluate the performance of our Bayesian nonparametric graph clustering (BNGC, henceforth)  method in scenarios with varying dimensions and to compare our method with other competing methods. Specifically, we evaluate the utility of using the partial correlation matrix for defining the weighted adjacency matrix for graphs and the effectiveness of using a Bayesian nonparametric model for cluster identification.

\noindent \underline{\bf Simulation design} We consider $p$-dimensional random variables with $p = 100, 200, 500$, and sample size $n = 100, 200$ for each $p$, with includes both $n \approx p$ and $n>p$ scenarios. The true number of clusters $K$ satisfies $1 \leq K \leq p$. We consider different values of $K$ such that $K = \floor{\frac{p}{5}},  \floor{\frac{p}{10}}, \floor{\frac{p}{20}}$. For each $K$, we randomly partition the $p$ variables in $K$ non-empty clusters, with all partitions having equal probability of occurrence. We simulate 10 such partitions for each $K$. For a given partition $\{\bX_1,\ldots,\bX_K\}$ of $\bX$, we generate data from a Gaussian mixture model as
\begin{equation}
f(\bx) = \prod_{j=1}^{K}f_j(\bx_j).
\end{equation}
Here, $f_j$ is a multivariate normal distribution of dimension $p_j$, with mean $\bm{0}$ and variance $\Sigma_j$. $\Sigma_j$ follows a Wishart distribution with degrees of freedom $p_j + 1$ and scale matrix identity, where $p_j$ denotes the size of cluster $j,\, 1\leq j \leq K$, and $\sum_{j=1}^K p_j = p$.

For each $n, p$ and $K$, we simulate 100 data sets and apply our method to identify the clusters. For construction of the adjacency matrices, we use the `spikeslab' package in \texttt{R}, applying the default parameter settings. For graph clustering using DPMM, we resort to the DP-means algorithm for faster computation, choosing the penalty parameter $\lambda$ for the number of clusters to be 1/2. We compute normalized mutual information (NMI) scores that correspond to each combination of $n,p,K$ to assess the performance of our method. NMI scores are useful for evaluating clustering performances when the true cluster labels are known. We denote the true cluster labels as $\mathcal{C}$ and the cluster labels obtained from a suitable clustering method as $\mathcal{\hat C}$. Note that both $\mathcal{C}$ and $\mathcal{\hat C}$ are partition sets of the same set of $p$ variables. As a measure of uncertainty, the entropy of a partition set $T = \{T_1,\ldots,T_s\},\, s = |T|$ is defined as 
$$ \mathrm{H}(T) = \sum_{i = 1}^{|T|} \frac{|T_i|}{p} \log \frac{|T_i|}{p}.$$ 
Also, the mutual information between $\mathcal{C}$ and $\mathcal{\hat C}$ is given by 
$$\mathrm{MI}(\mathcal{C}, \mathcal{\hat C}) = \sum_{i=1}^{|\mathcal{C}|} \sum_{j=1}^{|\mathcal{\hat C}|} \frac{|C_i \cap \hat{C}_j |} {p} \log \frac{ p |C_i \cap \hat{C}_j |}{| C_i | |\hat{C}_j |},$$ where $C_i$s and $\hat{C}_j$s are elements of the sets $\mathcal{C}$ and $\mathcal{\hat C}$, respectively.
The normalized information score between $\mathcal{C}$ and $\mathcal{\hat C}$ is then defined as 
$$\mathrm{NMI} ( \mathcal{C}, \mathcal{\hat C} ) = \frac{\mathrm{MI}(\mathcal{C}, \mathcal{\hat C})} {\sqrt {\mathrm{H}(\mathcal{C}) \mathrm{H} (\mathcal{\hat C}) }}.$$
NMI scores are bounded between 0 and 1, with zero implying complete independence of cluster labelings, and values close to one indicating significant agreement between the clusters. 

In addition to NMI scores, we compute the between-cluster edge densities to assess the overall clustering performance of our methods in finding tightly connected graph components. The between-cluster edge density for a partition set $\mathcal{\hat C}$ is given by
$$ \mathrm{edge\,density}(\mathcal{\hat C}) = \mathop{\sum_{X_i \in \hat{C}_{k_1}, \, X_j \in \hat{C}_{k_2}}}_{k_1 \neq k_2} w_{ij},$$
where $w_{ij}$ is the edge weight between $X_i$ and $X_j$ for the corresponding graphical model. Between-cluster edge densities are bounded below by zero, with zero implying that the partitioning of the graph is such that there are no edges shared by the vertices in different clusters. In our simulation design, the true graphs are such that between-cluster edge densities are perfectly zero; hence, the above measure is a good tool to use in evaluating the clustering performance of a method. Deviation of the above measure from zero indicates non-agreement with the true cluster labels.

\noindent \underline{\bf Comparison with competing methods}. To benchmark our results, we use three competing methods: k-means, graphical kernel-based (graph-kernel) and hierarchical clustering with average-linkage (ALC) methods. The first two are standard spectral clustering algorithms. The k-means method uses a weighted adjacency matrix that is identical to ours. We also compare our method with spectral algorithms applied to the adjacency matrices obtained from the data using kernel-based similarity measures. We use the \texttt{kernlab} package in \texttt{R} for fitting the latter method. Note that both spectral clustering methods require the specification of the number of clusters beforehand. The comparison to the k-means method shall assess the performance of using a nonparametric Bayesian model to arrive at the final cluster labels, as they both utilize the identical adjacency matrix. The comparison to the graph-kernel method shall assess the performance of the adjacency matrix constructed with the estimated partial correlation matrix from the Bayesian neighborhood selection approach, with regular adjacency matrices used in most of the applications. For the ALC, we use the empirical correlation matrix and provide the true number of clusters while performing hierarchical clustering.  Note that for all the competing methods, we need to supply the number of clusters, and we provide the \underline{true number of clusters} while assessing the performance, thus giving the methods a fair advantage. In comparison, our BNGC method learns the number of clusters from the data set, itself.

\noindent \underline{\bf Results.} The complete simulation results using the different metrics and scenarios are presented in Table 1. We observe that for all the methods, the NMI scores increase with an increase in sample size for fixed $p$ and $K$. However, the NMI scores corresponding to BNGC and k-means methods indicate that these two methods have significantly better performance than the graph-kernel and ALC methods across all $n, p, K$ scenarios. In fact, even for $n=200$, we observe NMI scores above $0.95$ for all $p$ and $K$ when using the BNGC and k-means methods, but the graph-kernel and ALC methods fail to achieve such a mark. Hence, the results favor the partial correlation matrix as the adjacency matrix for graph clustering. The performances of the methods with respect to the NMI scores are shown in Figure \ref{fig:nmiplots}.

The superior performances of the BNGC and k-means methods are also reflected in the edge densities between the clusters. For both methods that use the partial correlation matrix as the adjacency matrix, the between-cluster edge densities decrease with increasing sample sizes, but no such trend is seen for the remaining two methods. Also, although the edge densities for the BNGC and k-means methods shrink to zero, implying a significant absence of edges between clusters, the corresponding edge densities for the graph-kernel and ALC methods are much higher than zero, implying a poor clustering performance. This result further strengthens the use of a partial correlation matrix as an effective tool for graph clustering in such cases. We conjecture that the dependence structure in a graph is more efficiently captured by the partial correlation matrix in comparison with other adjacency matrices, such as the kernel-based adjacency matrix.

Although BNGC and k-means methods perform similarly, the performance of the BNGC method is slightly superior.  Also, the measure of the between-cluster edge densities goes to zero faster under our method than under the k-means method. This implies that cluster misspecification is lower under the BNGC method. In addition, the BNGC method performs better than the other methods when choosing the number of clusters. 

\begin{figure}
\centering
\includegraphics[scale=0.9]{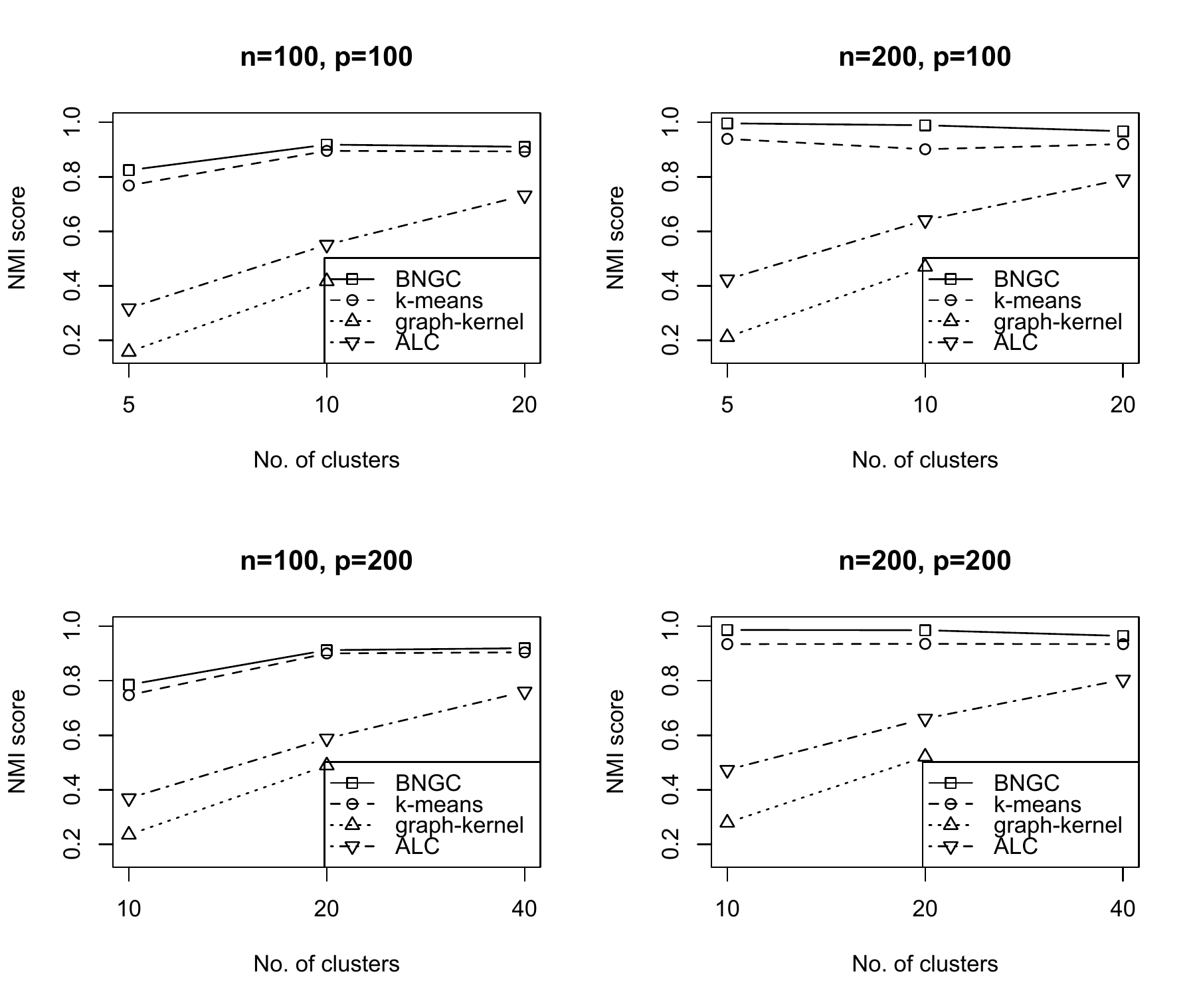}
\caption{Normalized mutual information (NMI) scores for different values of $n$ and $p$ that correspond to the various clustering methods.}
\label{fig:nmiplots}
\end{figure}

\begin{sidewaystable}
\footnotesize
\begin{center}
\label{Table:Table1}
{
\caption{Clustering performances of different methods for simulated data sets. Results have been averaged over $10$ random partitions, each having $100$ data sets for each choice of the number of clusters $K$ corresponding to different values of $n$ and $p$. Figures in parentheses denote the standard errors.}
\begin{tabular}{cccccccccccc}
\hline
&&& \multicolumn{3}{c}{NMI Score} & & \multicolumn{3}{c}{Edge density} \\
\cline{4-7}  \cline{9-12} 
n & p & K & k-means & BNGC & graph-kernel & ALC & & k-means  & BNGC & graph-kernel & ALC\\
\hline
\multirow{3}{*}{100} & \multirow{3}{*}{100} & 5 & 0.768 (0.158) &0.824 (0.170) &0.158 (0.049)  & 0.318 (0.074) & & 0.060 (0.061) &0.036 (0.040) &1.432 (0.151)  & 0.751 (0.097) \\
& & 10 & 0.895 (0.050) &0.918 (0.048) &0.417 (0.034) & 0.551 (0.049) & & 0.011 (0.010) &0.003 (0.003) &0.329 (0.031) & 0.168 (0.022)\\
& & 20 & 0.893 (0.030) &0.910 (0.039) & $\dagger$ \footnote{$\dagger$ implies that the method failed to produce any output due to memory problems} & 0.732 (0.027) & & 0.006 (0.004) &0.000 (0.000) & $\dagger$ & 0.032 (0.085)\\
&&&&&&&\\
\multirow{3}{*}{100} & \multirow{3}{*}{200} & 10 & 0.747 (0.098) &0.785 (0.118) &0.235 (0.037) & 0.369 (0.036) & & 0.073 (0.051) &0.041 (0.041) &0.619 (0.048) & 0.353 (0.026)\\
& & 20 &0.900 (0.033) &0.912 (0.043) &0.489 (0.026) & 0.588 (0.024) && 0.007 (0.004) &0.002 (0.002) &0.145 (0.010) & 0.079 (0.007)\\
& & 40 & 0.904 (0.020) &0.919 (0.022) & $\dagger$ & 0.760 (0.015) && 0.003 (0.001) &0.000 (0.000) & $\dagger$ & 0.015 (0.002)\\
&&&&&&&\\
\multirow{3}{*}{100} & \multirow{3}{*}{500} & 25 & 0.676 (0.092) &0.650 (0.123) &0.240 (0.031) & 0.433 (0.016) && 0.132 (0.108) &0.055 (0.045) &0.508 (0.032) & 0.125 (0.005)\\
& & 50 &0.879 (0.016) &0.882 (0.025) &0.520 (0.021) & 0.628 (0.010) && 0.005 (0.002) &0.002 (0.001) &0.295 (0.013) & 0.029 (0.001)\\
& & 100 & 0.882 (0.013) &0.887 (0.018) & $\dagger$ & 0.770 (0.007) && 0.002 (0.000) &0.000 (0.000) & $\dagger$ & 0.006 (0.000) \\
&&&&&&&\\
\multirow{3}{*}{200} & \multirow{3}{*}{100} & 5 & 0.939 (0.066) &0.996 (0.010) &0.212 (0.060) & 0.424 (0.096) && 0.060 (0.082) &0.005 (0.004) &2.280 (0.198)  & 1.297 (0.207)\\
& & 10 &  0.901 (0.091) &0.989 (0.015) &0.470 (0.051) & 0.641(0.055) & & 0.035 (0.036) &0.001 (0.001) &0.493 (0.051) & 0.267 (0.043)\\
& & 20 &  0.920 (0.039) &0.967 (0.019) & $\dagger$ & 0.791 (0.029) & & 0.010 (0.008) &0.000 (0.000) & $\dagger$ & 0.050 (0.008)\\
&&&&&&&\\
\multirow{3}{*}{200} & \multirow{3}{*}{200} & 10 & 0.934 (0.042) &0.986 (0.023) &0.279 (0.043)  & 0.473 (0.046) && 0.030 (0.025) &0.004 (0.002) &1.035 (0.063) & 0.622 (0.051)\\
& & 20 &0.935 (0.038) &0.985 (0.010) &0.521 (0.027) & 0.661 (0.027) && 0.014 (0.009) &0.001 (0.001) &0.229 (0.013) & 0.134 (0.013)\\
& & 40 & 0.934 (0.016) &0.964 (0.015) & $\dagger$ & 0.804 (0.014) & &  0.005 (0.003) &0.000 (0.000) & $\dagger$ & 0.025 (0.002)\\
&&&&&&&\\
\multirow{3}{*}{200} & \multirow{3}{*}{500} & 25 & 0.929 (0.023) &0.988 (0.011) &0.232 (0.031) & 0.526 (0.021) && 0.012 (0.005) &0.003 (0.001) &0.519 (0.042) & 0.234 (0.009)\\
& & 50 &0.953 (0.011) &0.981 (0.007) &0.427 (0.036) & 0.691 (0.013) && 0.004 (0.001) &0.000 (0.000) &0.121 (0.016) & 0.051 (0.002)\\
& & 100 & 0.935 (0.011) &0.954 (0.009) & $\dagger$ & 0.809 (0.008)  && 0.002 (0.001) &0.000 (0.000) & $\dagger$ & 0.010 (0.001) \\
\hline
\end{tabular}
}
\end{center}
\end{sidewaystable}

\section{Proteomic signaling networks in cancer}
\label{Sec:Real data}

\subsection{Scientific problem and data description}

Proteins are the ultimate effector molecule of cellular functions. Proteomics, in general, can be defined as a large-scale high-throughput study of proteins from a variety of biological samples in order to investigate their ontology, classification, expression levels, and properties. A primary functional proteomic technology is reverse-phase protein array (RPPA), which allows for quantitative, high-throughput, time- and cost-efficient analysis of proteins using small amounts of biological material \citep{paweletz01,tibes2006}. RPPA allows for the simultaneous assessment of multiple protein markers in multiple tumor samples in a streamlined and reproducible manner \citep{sheehan2005,spurrier2008}. This technology has been extensively validated for both cell line and patient samples \citep{tibes2006,hennessy2010,nishizuka2003}, and its applications range from building reproducible prognostic models \citep{yang2013} to generating experimentally verified mechanistic insights \citep{liang7} and biomarker discovery, especially in cancer \citep{hennessy2010}. For a detailed introduction, data pre-processing and normalization of RPPA data, see \cite{baladandayuthapani2014bayesian}.

It is well established  that oncogenic proteomic changes occur in a coordinated manner across multiple signaling networks and  pathways, reflecting various pathobiological processes \citep{zhang09,halaban10}.  Numerous  inhibitors of such pathways have been used in clinical trials, frequently demonstrating dramatic clinical activity. Inhibitors that target protein signaling pathways have been approved by the U.S. Food and Drug Administration for a variety of cancer types, including  leukemia, breast cancer, colon cancer, renal cell carcinoma, and gastrointestinal cancer \citep{davies06}. Thus, developing an accurate understanding of the composition (i.e., clustering) and topology (i.e., graph) of these protein signaling networks across multiple cancer  types can provide deeper biological insights about proteomic activity.

The proteomic data set we consider here was generated by RPPA  analysis of 3467 patient tumor samples across 11 cancer types, and was  obtained from The Cancer Genome Atlas (TCGA, http://cancergenome.nih.gov). The tumor samples include 747 breast cancer (BRCA), 334 colon adenocarcinoma (COAD), 130 renal adenocarcinoma (READ), 454 renal clear cell carcinoma (KIRC), 412 high-grade serous ovarian cystadenocarcinoma (OVCA), 404 uterine corpus endometrial carcinoma (UCEC), 237 lung adenocarcinoma (LUAD), 212 head and neck squamous cell carcinoma (HNSC), 195 lung squamous cell carcinoma (LUSC), 127 bladder urothelial carcinoma (BLCA), and 215 glioblastoma multiforme (GBM) samples. From each tumor sample, we have measurements of 181 different proteins that cover major functional and signaling pathways in cancer such as P13K, MAPK and mTOR. Based on the availability of protein data across a large number of tumor samples, the scientific objectives on which we focus in this paper are: (a) evaluate how the protein network topology change for each cancer type; and (b) use this information to evaluate how proteomic clusterings change within and across cancer types. These investigations will provide insights into clusters that are conserved across multiple tumors as well differential networks/clusters that are tumor-specific.

\subsection{Results}

\subsubsection{Cancer-specific clustering}
For each cancer type, we apply our nonparametric graphical clustering model, as explained in the previous sections, to obtain cancer-specific networks and corresponding clusters.  The number of clusters identified by our method for each cancer type is shown in Figure \ref{fig:numcluster}. The results show considerable cluster heterogeneity between cancer-types, with bladder cancer (BLCA) having the largest number of clusters and squamous cell lung cancer (LUSC) the lowest. Those results are consistent with previously published findings \citep{akbani2014pan}, where bladder cancer was shown to be the most heterogeneous disease among the 11 different tumor types studied, in terms of proteomic activity.

The corresponding networks of proteins for BRCA, LUSC, READ and GBM are presented in Figures \ref{fig:Ng1}, \ref{fig:Ng2}, \ref{fig:Ng3} and \ref{fig:Ng4}, respectively; the rest are provided in the Supplementary Materials (Section S4). For better visualization, we highlight clusters that include at least 4 proteins. As is evident from the figures, the within-cluster edges are considerably higher compared to the between-cluster edges, which establishes that our method performs reasonably well in picking relevant protein clusters.

\begin{figure}
\includegraphics[scale=0.6]{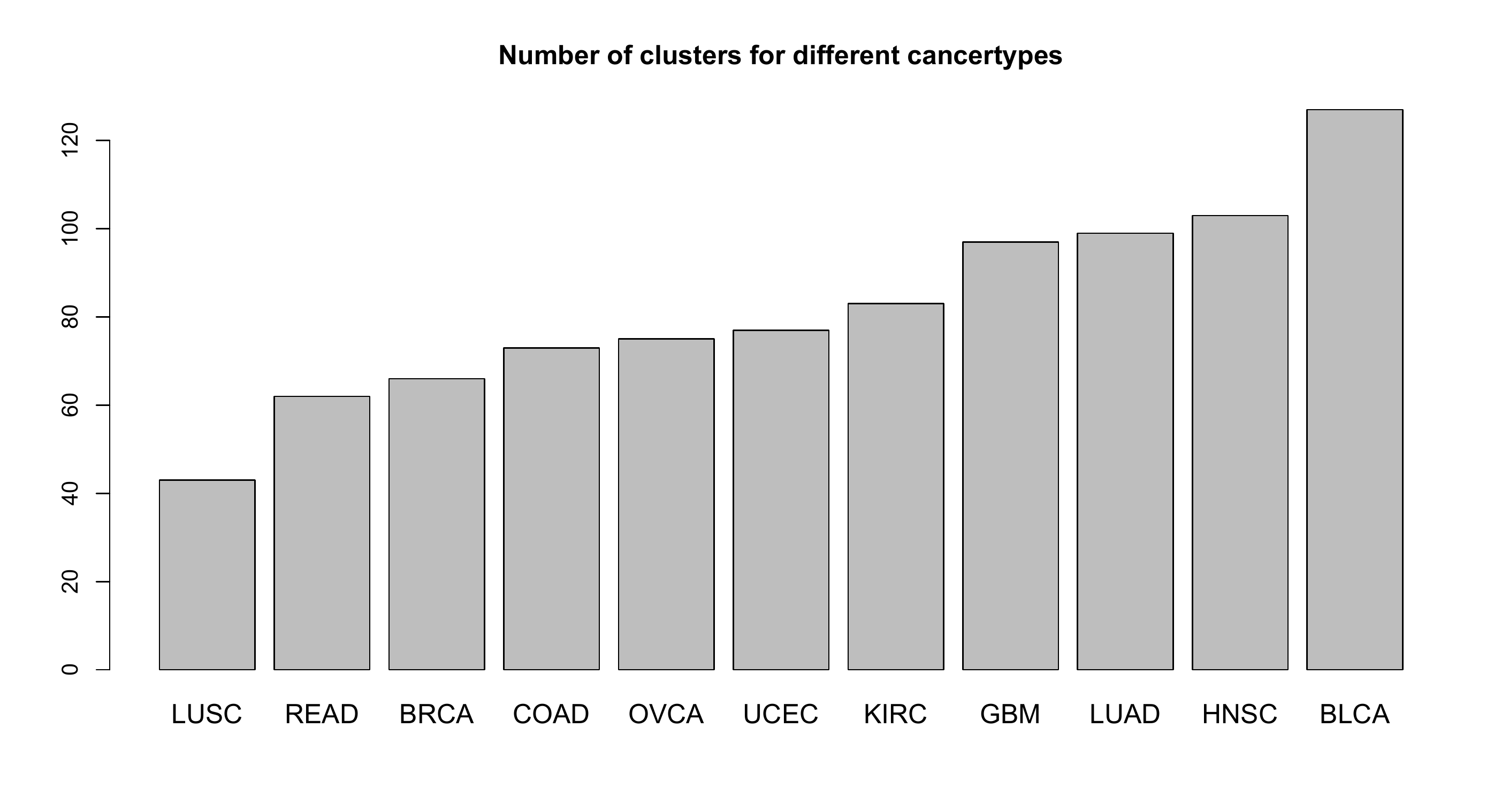}
\caption{Number of clusters identified by the BNGC method for different cancer types.}
\label{fig:numcluster}
\end{figure}

\subsubsection{Validation using known signaling pathways}

To further investigate the proteomic clusters identified by our method, we used known biological knowledge -- proteins that are mapped to existing signaling pathways. We used information from 12 signaling pathways: apoptosis, breast reactive, cell cycle, core reactive, DNA damage response, EMT, hormone receptor, hormone signaling, PI3K/AKT, RAS/MAPK, RTK, and TSC/mTOR. We then defined a {\it pathway enrichment probability} matrix $\mathcal{P}^{(t)}_E$ that corresponds to cancer type $t$ as a metric to evaluate which pathways are grouped together in different cancers. This matrix is of dimension $K_t \times 12$, where $K_t$ is the number of major clusters in cancer type $t$. For each cancer type, we follow a Bayesian hypothesis testing procedure to check whether the proportion of proteins from pathway $j$ in cluster $k$ is significantly higher than the proportion of proteins from pathway $j$ outside cluster $k$. This is a simple test to determine whether the binomial proportion is greater than $1/2$, and can be carried out using a beta-binomial model from a Bayesian perspective. Denoting the number of proteins from pathway $j$ in cluster $k$ as $y_{kj}$ and the total number of proteins in pathway $j$ as $N_j$, the test model is given by $$y_{kj}\mid \theta \sim \mathrm{Bin}(N_j , \theta),\,\theta \sim \mathrm{Beta}(1,1).$$
The posterior distribution of $\theta \mid N_j, y_{kj}$ is $\mathrm{Beta}(y_{kj} + 1, N_j - y_{kj} + 1)$. Thus, the $(k,j)^\mathrm{th}$ element of $\mathcal{P}^{(t)}_E$ for a particular cancer type $t$  is given by 
$$ \mathcal{P}^{(t)}_{E,kj} = \mathrm{Pr}( \theta > 0.5 \mid N_j, y_{kj}),$$
the posterior probability that $\theta$ exceeds $1/2$, which can be easily computed using Monte Carlo methods.

The corresponding pathway enrichment probabilities for the different cluster-cancer combinations are shown as a heatmap in  Figure \ref{fig:proenrich}. Major pathways that are enriched ($\mathcal{P}_E > 0.5$) are also shown in Table~\ref{Table:enrich}.
We find major enrichment in three pathways: hormone receptor (6 cancers), core reactive (6 cancers), and RTK (3 cancers).

\begin{table}[h]
\begin{center}
    \begin{tabular}{ cl p{5cm} l}
    \hline
    Cancer type & Enriched pathways \\
    \hline
    LUSC & Cell cycle, Core Reactive, Hormone receptor, RAS/MAPK, RTK\\
    LUAD & Core reactive, DNA damage response, RTK \\
    READ & Core reactive, Hormone receptor, TSC/mTOR \\
    COAD & Breast reactive, Core Reactive, Hormone receptor \\
    BRCA & Breast reactive, Hormone receptor, RTK \\
    OVCA & Core Reactive, RAS/MAPK \\
    UCEC & Core Reactive, Hormone receptor \\ 
    KIRC & Cell cycle, TSC/mTOR \\
    GBM & Hormone receptor \\
    BLCA & Hormone Signalling \\
    \hline
    \end{tabular}
    \caption{Table showing pathways for different cancer types with pathway enrichment probability exceeding 0.5.}
     \label{Table:enrich}
\end{center}
\end{table}

\subsection{Biological interpretation of results}

A comparison of Figure~\ref{fig:numcluster} with Table~\ref{Table:enrich} shows a roughly inverse relationship between the number of clusters versus the number of pathways identified in a given tumor type (with a few exceptions). For example, BLCA has the largest number of clusters but only one pathway, whereas LUSC has the fewest number of clusters, but the largest number of pathways (five). Part of the explanation is that we are looking for pathways that are enriched consistently across the full cohort of samples within a tumor type. If a tumor type is very heterogeneous (such as BLCA, \cite{cancer2014comprehensive}), that decreases the chances that we will find enrichment of a pathway consistently across all the samples within that tumor type, which in turn translates to fewer enriched pathways found. However, tumor type heterogeneity will result in a greater number of distinct clusters found. GBM, for instance, was previously shown to have 4 to 5 distinct subtypes \citep{verhaak2010integrated, brennan2013somatic}, but only one enriched pathway was found by us.

Several of the pathways found in Table~\ref{Table:enrich} have known biological underpinnings. For example, it is well-known that a large subset of women's cancers, such as breast (non-triple negatives) and endometrial, have elevated hormone receptors (ER and PR) \citep{perou2000molecular, sorlie2001gene, creasman1993prognostic}, which was picked up by our method. In addition, a substantial number of breast cancers have elevated HER2 levels and up-regulation of the RTK pathway as well \citep{perou2000molecular, sorlie2001gene}. The breast reactive pathway was defined using previously described breast reactive samples \citep{cancer2012comprehensive} and successfully identified by our algorithm as a validation check. The reactive markers were also found to be high in colorectal cancers (COAD/READ) in a previous study \citep{akbani2014pan}, and we have found the same pathway to be activated in COAD/READ. Hormone signaling activity related to the expression of GATA3 is high in both normal and malignant bladder samples and it is captured by our results \citep{miyamoto2012gata}. The tumor suppressor gene BTG3 has been shown to be down-regulated in renal cancers through hyper-methylation of its promoter \citep{majid2009btg3}. BTG3 is a cell cycle inhibitor, so its down-regulation increases cell cycle activity in renal cancers, which is also illustrated by our results. Other novel results, such as the role of hormone receptors in GBM, remain to be investigated in detail.

\begin{figure}
\centering
   \begin{subfigure}[b]{0.75\textwidth}
   \includegraphics[width=1\linewidth]{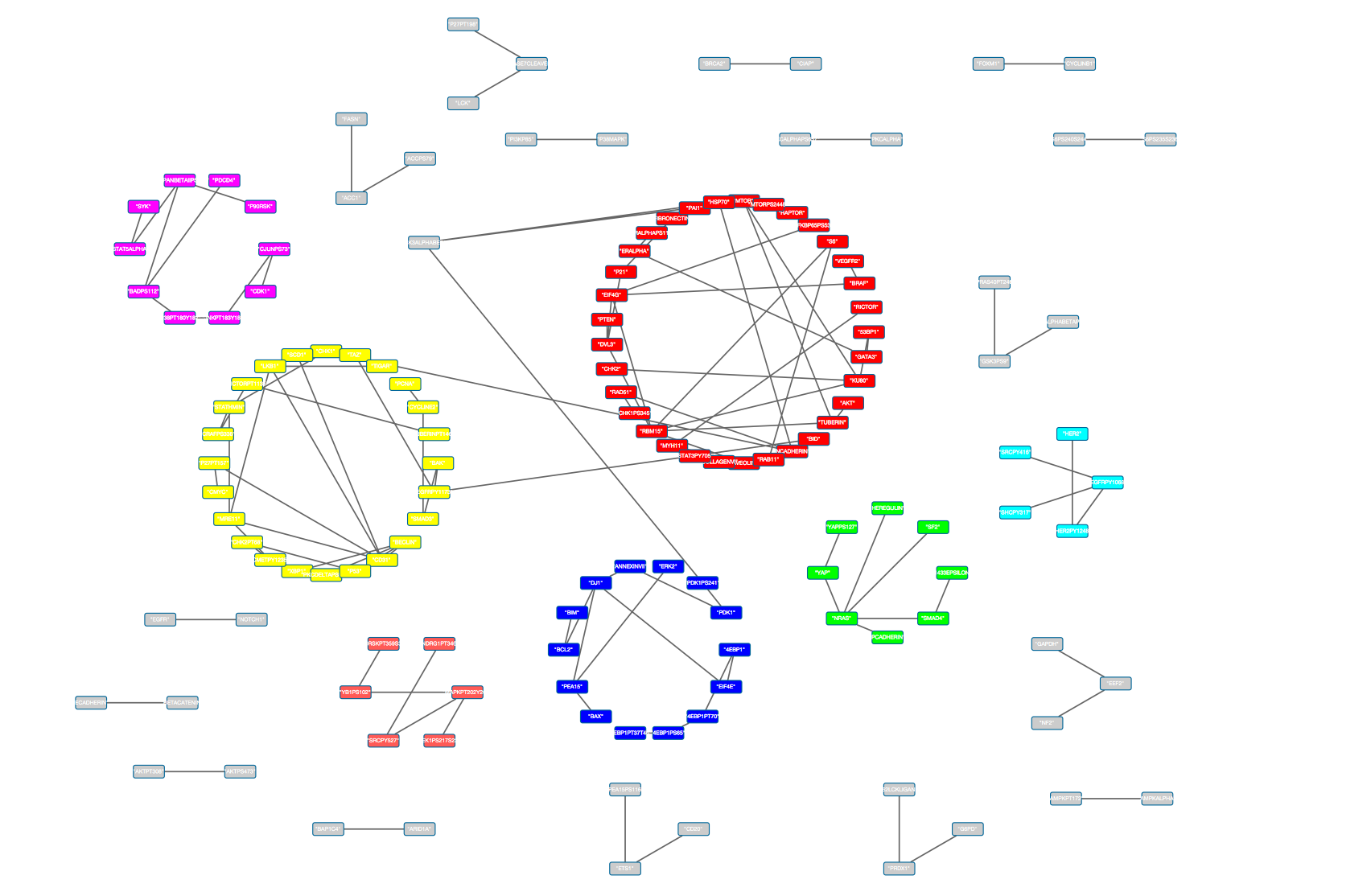}
   \caption{}
   \label{fig:Ng1} 
\end{subfigure}

\begin{subfigure}[b]{0.75\textwidth}
   \includegraphics[width=1\linewidth]{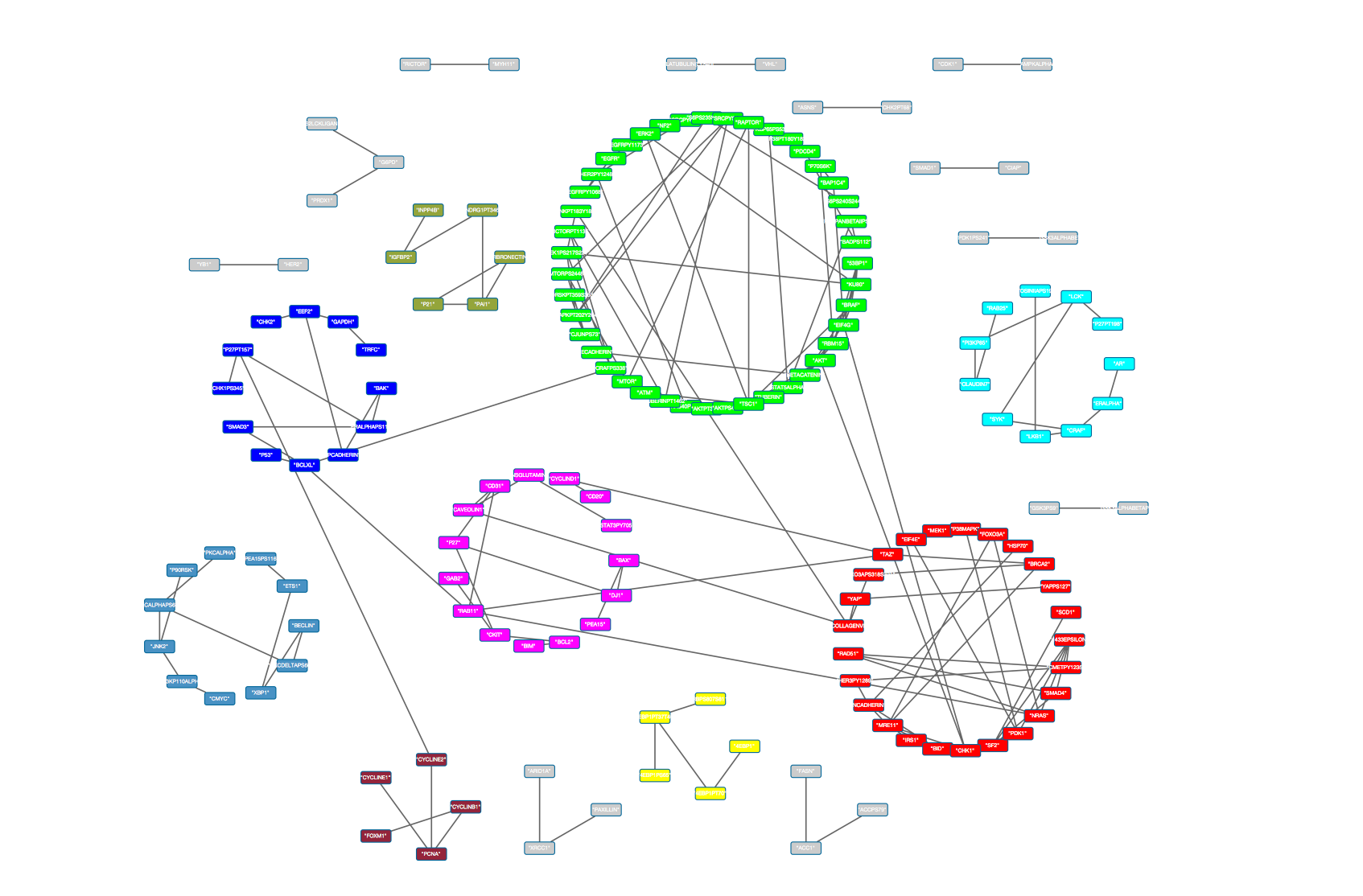}
   \caption{}
   \label{fig:Ng2}
\end{subfigure}

\caption{Protein networks for (a) breast cancer (BRCA) and (b) lung squamous cell carcinoma (LUSC). Clusters with at least 4 proteins are color-coded; those with fewer are gray.}
\end{figure}

\begin{figure}
\centering
   \begin{subfigure}[b]{0.75\textwidth}
   \includegraphics[width=1.1\linewidth]{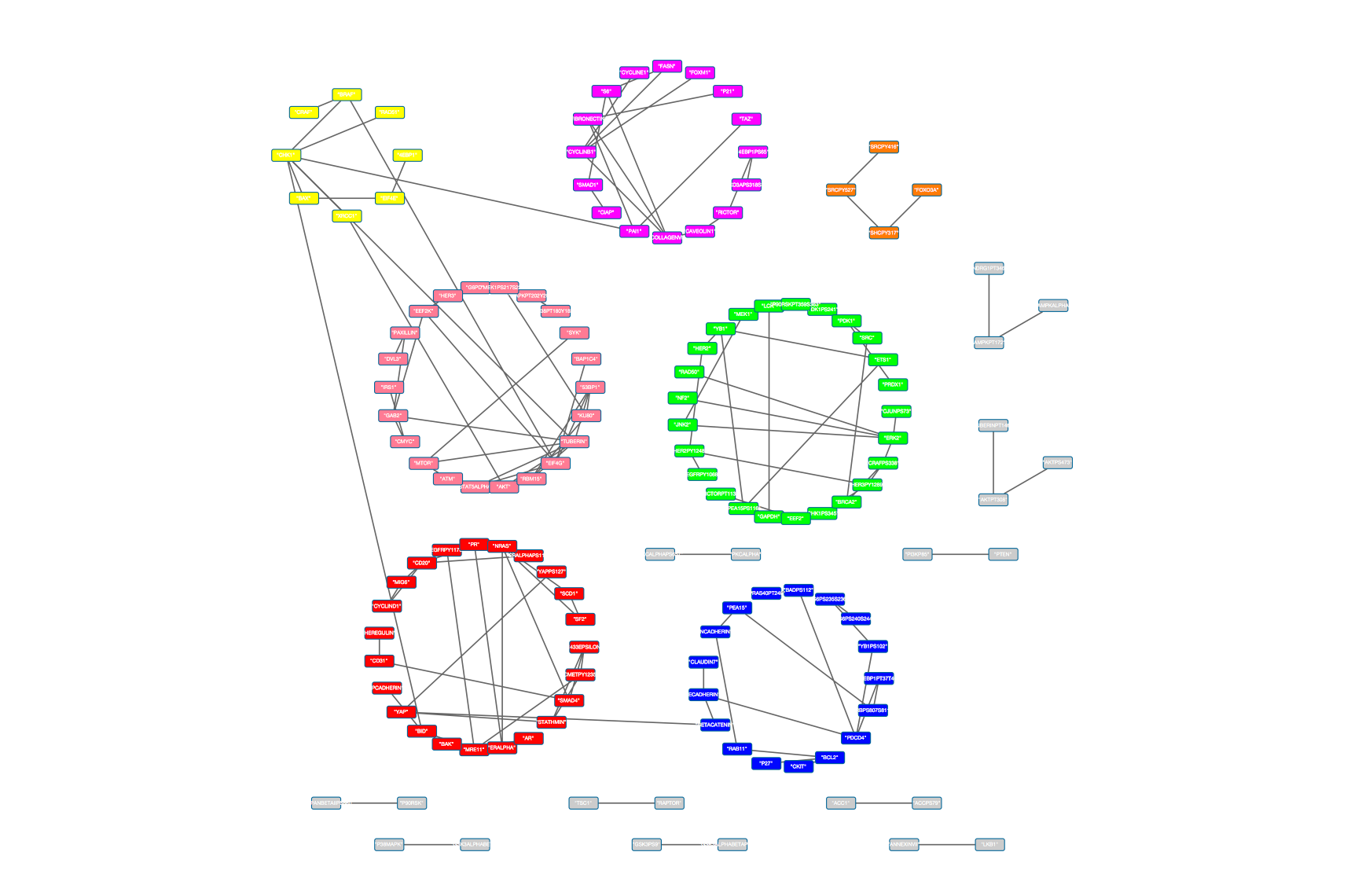}
   \caption{}
   \label{fig:Ng3} 
\end{subfigure}

\begin{subfigure}[b]{0.75\textwidth}
   \includegraphics[width=1.1\linewidth]{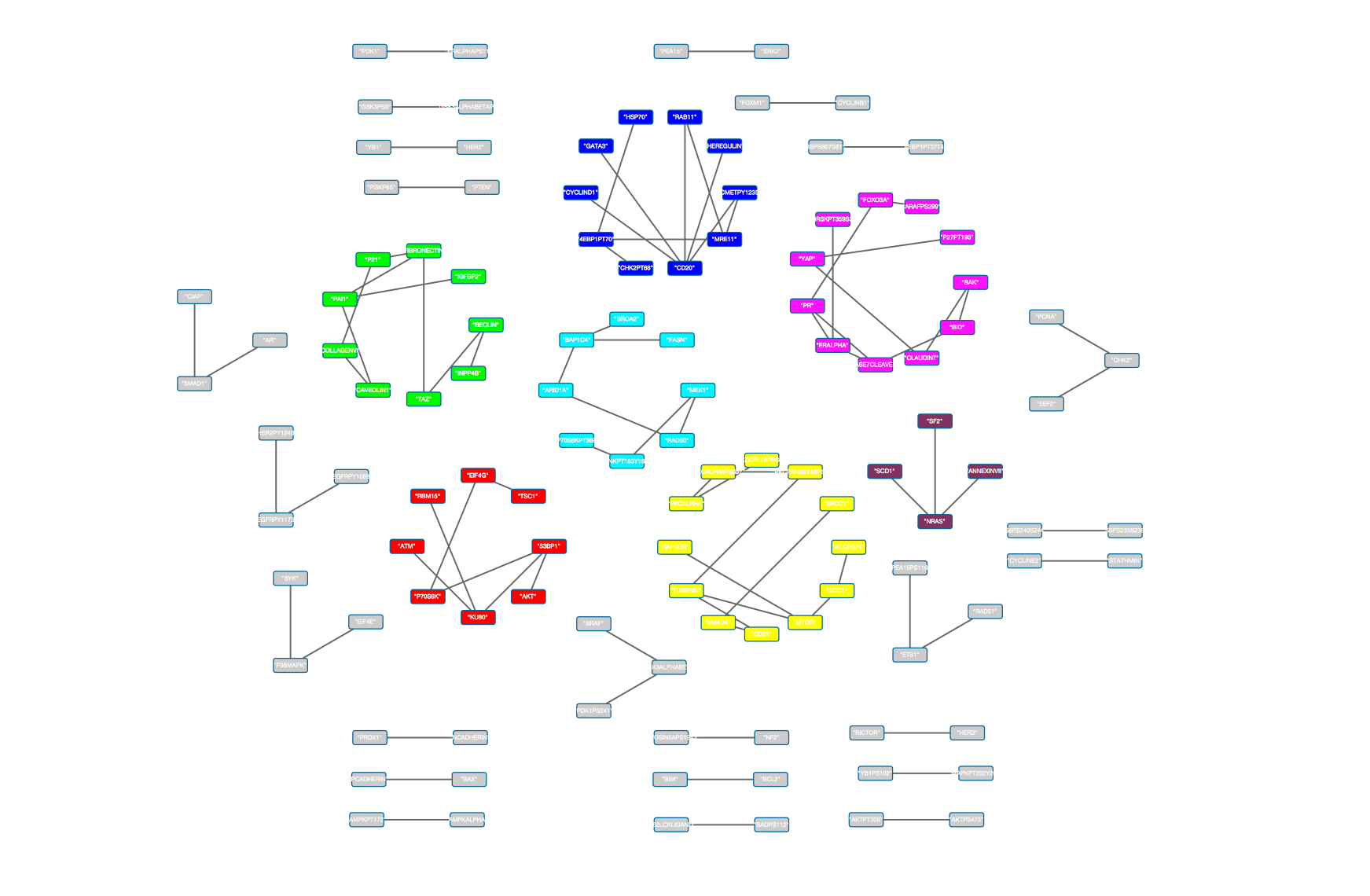}
   \caption{}
   \label{fig:Ng4}
\end{subfigure}

\caption{Protein networks for (a) renal adenocarcinoma (READ) and (b) glioblastoma multiform (GBM). Clusters with at least 4 proteins are color-coded; those with fewer are gray.}
\end{figure}

\begin{sidewaysfigure}
\includegraphics[scale=0.8]{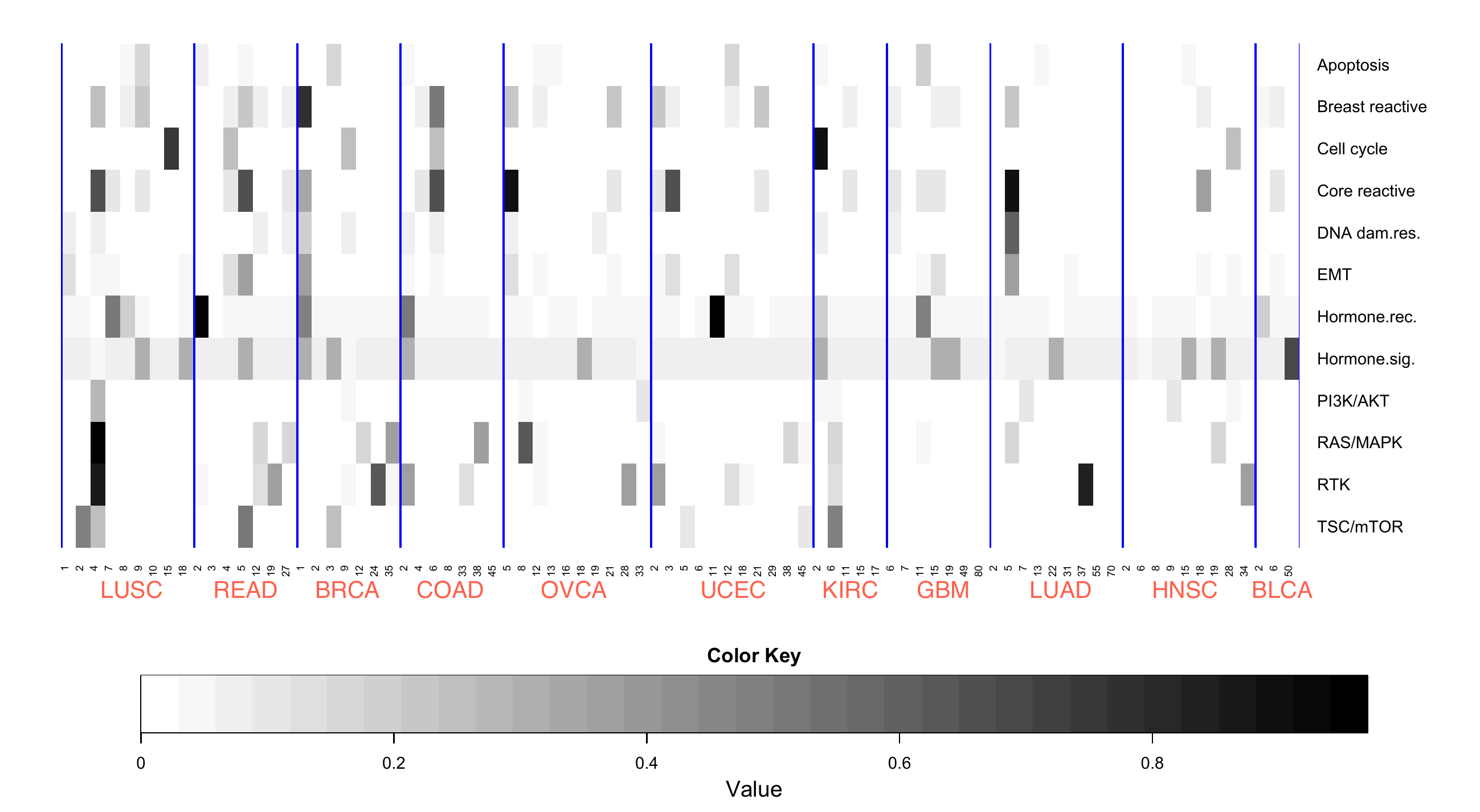}
\caption{Heatmap of pathway enrichment probabilities for different cancers. The rows represent signaling pathways and columns indicate protein clusters ($> 3$ proteins), corresponding to different cancer types and separated by blue vertical lines.}
\label{fig:proenrich}
\end{sidewaysfigure}

\section{Summary and discussion}

We develop nonparametric methods for identifying clusters of variables for moderate to high-dimensional data with graphical dependencies. Our fully probabilistic methods allow joint graphical structure learning and cluster identification utilizing the eigenspace of the underlying graph Laplacians. 
We provide rigorous theoretical justifications for choosing Laplacian embeddings as suitable projections for graph-structured data. In addition, we present fast and scalable computing techniques that are capable of handling high-dimensional data sets. Through simulations, we demonstrate superior numerical performance of our methods over standard competing methods from the literature. Our methods are motivated by and applied to a novel pan-cancer proteomic dataset where we infer tightly connected clusters of proteins across various cancer types validate them using known biological pathway-based information.

From the data analysis perspective, the proteomic data we have, consists of the same set of proteins over different subjects and cancer types. Although we analyzed each data set separately, a combined analysis across different cancer types can be performed by pooling the ``matched" data across proteins. In the Supplementary Material Section S1, we discuss a possible approach to this problem using our Bayesian graphical clustering method. The approach takes into account the graph Laplacians for each of the data sets and uses a joint model for the local clusters within the individual data sets and a single global cluster across all cancer types. An open problem in this regard is that we do not know the behavior of the Laplacians in determining the overall global clustering; we leave that for a future study. 

Another possible extension of the clustering problem discussed in this article is to work with discrete or mixed graphical models, where the variables are no longer required to be continuous, but may vary over different domains, including categorical variables. Such data sets are becoming more abundant in the literature; thus, smarter scalable methods need to be developed. In these contexts, the relations among the different variables become nonlinear; hence, we need to extend our work beyond Gaussian graphical models. Precision matrices (or partial correlation matrices) do not provide an effective way of defining the edges of the corresponding graph. We need to develop measures that can capture the nonlinear dependency among the vertices of the graph and translate the strength of association to the corresponding edges. We are currently exploring these aspects using fully nonparametric techniques for graphical models.

\section{Acknowledgements}

S.B. and V.B. were partially supported by NIH grant R01 CA160736. R.A. was supported in part by U.S. National Cancer Institute (NCI; MD Anderson TCGA Genome Data Analysis Center) grant number CA143883, the Cancer Prevention Research Institute of Texas (CPRIT) grant number RP130397, the Mary K. Chapman Foundation, the Michael \& Susan Dell Foundation (honoring Lorraine Dell). All authors were also partially supported by  MD Anderson Cancer Center Support Grant P30 CA016672 (the Bioinformatics Shared Resource).

\appendix

\section{Proofs and additional results}

We present the proof of the consistency of the graph Laplacian. The proof uses standard matrix ordering inequalities, which we present in the following lemma.
\begin{lemma}
For symmetric matrices $A$ and $B$ of order $p$, we have,
\begin{enumerate}
\item $\|A\|_{\infty} \leq \|A\|_{(2,2)} \leq \|A\|_2 \leq p\|A\|_{\infty}$,
\item $\|AB\|_{(2,2)} \leq \|A\|_{(2,2)}\|B\|_2,\, \|AB\|_{(2,2)} \leq \|A\|_2\|B\|_{(2,2)}$.
\end{enumerate}
\end{lemma}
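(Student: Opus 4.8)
The plan is to treat the two parts separately, handling Part 1 as three elementary comparisons and then deriving Part 2 as a quick consequence of the submultiplicativity of the $L_2$-operator norm together with Part 1.

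For the first chain in Part 1, I would establish each of the three inequalities in turn. To get $\|A\|_\infty \le \|A\|_{(2,2)}$, I would write an arbitrary entry as $a_{ij} = e_i^\top A e_j$ for standard basis vectors $e_i,e_j$, and bound $|a_{ij}| \le \|e_i\|_2\,\|A e_j\|_2 \le \|A\|_{(2,2)}$ by Cauchy--Schwarz and the definition of the operator norm (using $\|e_i\|_2 = \|e_j\|_2 = 1$); taking the maximum over $i,j$ gives the claim. For $\|A\|_{(2,2)} \le \|A\|_2$, I would use that for symmetric $A$ one has $\|A\|_{(2,2)} = \max_i |\eig_i(A)|$ while $\|A\|_2 = (\sum_i \eig_i(A)^2)^{1/2}$, so the bound reduces to the trivial fact $\max_i |\eig_i(A)| \le (\sum_i \eig_i(A)^2)^{1/2}$. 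Finally, for $\|A\|_2 \le p\|A\|_\infty$, I would note that $\|A\|_2^2 = \sum_{i,j} a_{ij}^2 \le p^2 \max_{i,j} a_{ij}^2 = p^2 \|A\|_\infty^2$, since there are $p^2$ summands each at most $\|A\|_\infty^2$, and take square roots.

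For Part 2, the key tool is submultiplicativity of the operator norm, namely $\|AB\|_{(2,2)} \le \|A\|_{(2,2)}\|B\|_{(2,2)}$, which I would obtain from the pointwise estimate $\|ABx\|_2 \le \|A\|_{(2,2)}\|Bx\|_2 \le \|A\|_{(2,2)}\|B\|_{(2,2)}\|x\|_2$ holding for every $x$. Combining this with the inequality $\|B\|_{(2,2)} \le \|B\|_2$ already proved in Part 1 yields $\|AB\|_{(2,2)} \le \|A\|_{(2,2)}\|B\|_2$; applying instead $\|A\|_{(2,2)} \le \|A\|_2$ to the first factor gives the symmetric statement $\|AB\|_{(2,2)} \le \|A\|_2\|B\|_{(2,2)}$.

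There is no genuine obstacle here: every step is a standard matrix-norm estimate, and the only care needed is to invoke the symmetry of the matrices precisely where it licenses the identification of the $L_2$-operator norm with the largest absolute eigenvalue, and to keep straight, in Part 2, which factor the operator norm is being applied to when downgrading one norm to the other.
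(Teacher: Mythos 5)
Your proof is correct, and each step is the standard argument; the paper itself states this lemma without proof, simply citing it as a collection of standard matrix-norm inequalities, so there is no alternative route in the paper to compare against. The only point worth noting is that symmetry is used exactly where you say it is (to identify $\|A\|_{(2,2)}$ with $\max_i|\eig_i(A)|$ and $\|A\|_2$ with $(\sum_i \eig_i(A)^2)^{1/2}$), and the rest of the chain holds for arbitrary matrices.
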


\begin{proof}[Proof of Theorem~\ref{theorem:Lapbound}]

We show the consistency of the graph Laplacian in the operator norm. We work with the normalized Laplacian $L_{\mathrm{sym}}$, and denote it by $L$ for simplicity. Recall that $L = I - D^{-1/2}WD^{-1/2}$ is the graph Laplacian obtained from the data with weighted adjacency matrix $W = (\!(w_{lj})\!)$, and $\mcL = I - \mcD^{-1/2}\mcW\mcD^{-1/2}$ is the true Laplacian corresponding to the true adjacency matrix $\mcW = (\!(w_{lj}^0)\!)$. The $L_2$-operator norm of the difference between the matrices $L$ and $\mcL$ is given by
\begin{eqnarray}
\lefteqn{ \| D^{-1/2}WD^{-1/2} - \mcD^{-1/2}\mcW\mcD^{-1/2}\|_{(2,2)} }\nonumber \\ 
&=& \| D^{-1/2}(W - \mcW) D^{-1/2} + D^{-1/2} \mcW (D^{-1/2} - \mcD^{-1/2})  + (D^{-1/2} - \mcD^{-1/2}) \mcW \mcD^{-1/2} \|_{(2,2)} \nonumber \\
&\leq& \| D^{-1/2}(W - \mcW) D^{-1/2}\|_{(2,2)} + \|  D^{-1/2} \mcW (D^{-1/2} - \mcD^{-1/2})\|_{(2,2)}  \nonumber \\
&& + \| (D^{-1/2} - \mcD^{-1/2}) \mcW \mcD^{-1/2} \|_{(2,2)} \nonumber \\
&=& T_1 + T_2 + T_3.
\end{eqnarray}

Now,
\begin{equation}
T_1 = \| D^{-1/2}(W - \mcW) D^{-1/2}\|_{(2,2)} \leq  \|D^{-1}\|_{(2,2)} \| W - \mcW \|_{(2,2)}. \nonumber
\end{equation}
Note that 
\begin{eqnarray}
\|\mcD^{-1}\|_{(2,2)} &=& \underset{l}{\mathrm{max}} \left| \eig_l(\mcD^{-1})\right|  =  \underset{l}{\mathrm{max}} \frac{1}{\left(\sum_{j \neq l} w_{lj}^0\right)} \nonumber \\
&=& \frac{1}{\underset{l}{\mathrm{min}} \left(\sum_{j \neq l} w_{lj}^0\right)}  = \frac{1}{p\tau_p},
\end{eqnarray}
where $\tau_p$ is the minimum degree of a vertex divided by the maximum possible degree.
We have assumed that $\tau_p^2 > (p \log p)^{-1}$, so that $p\tau_p > p^{1/2}(\log p)^{-1/2}$. Hence, we obtain $\|\mcD^{-1}\|_{(2,2)} \leq (\log p /p )^{1/2}$.
Also,
\begin{eqnarray*}
\|D^{-1}\|_{(2,2)} & \leq & \|\mcD^{-1}\|_{(2,2)} + \|D^{-1} - \mcD^{-1}\|_{(2,2)} \\
&\leq&  \|\mcD^{-1}\|_{(2,2)} + \|D^{-1}\|_{(2,2)}\|D- \mcD\|_{(2,2)}\|\mcD^{-1}\|_{(2,2)},
\end{eqnarray*}
which gives us
\begin{equation}
\|D^{-1}\|_{(2,2)} \leq \frac{\|\mcD^{-1}\|_{(2,2)}}{1 - \|\mcD^{-1}\|_{(2,2)}\|D - \mcD\|_{(2,2)}}.
\end{equation}

Then, 
\begin{eqnarray*}
\| D - \mcD \|_{(2,2)} & \leq & \| D - \mcD \|_2 = \left\{ \tr(D - \mcD)^2\right\}^{1/2} \\
&=& \left\{ \sum_{l=1}^p \left(\sum_{j=1}^p w_{lj} - \sum_{j=1}^p w_{lj}^0\right)^2\right\}^{1/2} \leq \left\{ p \sum_{l=1}^p \sum_{j=1}^p (w_{lj} - w_{lj}^0)^2\right\}^{1/2}\\
&=& p^{1/2}\| W - \mcW \|_2.
\end{eqnarray*}

Thus, $\| D^{-1} \|_{(2,2)} \leq (\log p/p)^{1/2}$ with high probability if $(\log p)^{1/2} \| W - \mcW \|_2 = o_P(1)$. Hence, we obtain, $T_1 \leq p^{-1/2}(\log p)^{1/2}\| W - \mcW \|_2.$

Now, we determine bounds for the term $T_2$. We have
\begin{eqnarray}
T_2 &=& \| D^{-1/2} \mcW (D^{-1/2} - \mcD^{-1/2})\|_{(2,2)} \nonumber \\
&\leq& \| D^{-1/2}\|_{(2,2)} \| \mcW\|_{(2,2)}\| D^{-1/2} - \mcD^{-1/2} \|_{(2,2)} \nonumber \\
&\leq& \| D^{-1/2}\|_{(2,2)}^2 \| \mcD^{-1/2}\|_{(2,2)} \| \mcW\|_{(2,2)}\| D^{1/2} - \mcD^{1/2} \|_{(2,2)}
\end{eqnarray}
Now, $\| D^{1/2} - \mcD^{1/2} \|_{(2,2)} \leq \| D^{1/2} - \mcD^{1/2} \|_2$, so that we get,
$$T_2 \leq \| D^{-1/2}\|_{(2,2)}^2 \| \mcD^{-1/2}\|_{(2,2)} \| \mcW\|_{(2,2)} p^{1/2} \|W - \mcW\|_2.$$
Also, $\| \mcD^{-1/2} \|_{(2,2)}  = \| \mcD^{-1}\|_{(2,2)}^{1/2} < p^{-1/4}(\log p)^{1/4}$, and $\|D^{-1/2}\|_{(2,2)}^2 = \| D^{-1}\|_{(2,2)} < p^{-1/2}(\log p)^{1/2}.$ Hence, we obtain
$$T_2 \leq p^{-1/4} (\log p)^{3/4} \| W - \mcW \|_2 \| \mcW \|_{(2,2)}.$$
Similar arguments lead to the same bounds as above for $T_3$. We have $\| \mcW \|_{(2,2)} \leq \| \mcW \|_{(\infty,\infty)} = \underset{l}{\mathrm{max}} \sum_{j=1}^p | w_{lj}^0| < p^{\kappa}$, say. Then,
$$ \| L - \mcL\|_{(2,2)} \leq p^{\kappa - 1/4}(\log p)^{3/4} \| W - \mcW \|_2.$$

This completes the proof.
\end{proof}

\newpage

\begin{center}
{\bf \Large Supplementary}
\end{center}

\beginsupplement

\section{Clustering for multiple datasets}
\label{Sec:Integrative}

Suppose we have $J$ data sources with different set of subjects but same set of variables corresponding to each of the subjects and data sources. Our aim is to cluster the variables (for example, proteins in cancer data) for multiple data sources, involving the same set of variables, taking into account the heterogeneity of the data sets and the interdependence between them. We approach this problem from a Bayesian graph clustering point of view as in the main paper, but now modifying our method so as to deal with multiple data sets. We discuss the clustering model and method in the following sections.

\subsection{Clustering model and method}

\cite{lock2013bayesian} considered the problem of clustering a fixed set of objects based on multiple datasets. Their method, called Bayesian Consensus Clustering (BCC), tries to simultaneously determine the source-specific clustering and the global clustering of objects incorporating the heterogeneity of the individual datasets. BCC uses Dirichlet mixture model for different data sources. The conditional distribution of the source-specific clusterings given the global clusterings are modeled through a suitably defined dependence function with an adherence parameter. The number of local and global clusters are taken to be equal. Effective estimation of the clustering model is carried out using a Bayesian framework involving MCMC methods.

We borrow a similar idea like above, but instead of clustering the objects, we cluster the variables in question. We first construct the weighted adjacency matrix corresponding to each of the data sets using the methods described in the Section 3 of the main manuscript, and then obtain the corresponding graph Laplacians. Corresponding to the $J$ data sources, we perform Laplacian embedding of the variables so as to get $\mathcal{Y}_1,\ldots, \mathcal{Y}_J$ as the modified data. The embedded data point for variable $t$ in data set $j$ is denoted by $Y_{jt}$. We assume that there are separate local clusters for each of the data sets (denoted by $\mathcal{C}_j^l$ for data set $j$) and one global cluster involving all the data sets (denoted by $\mathcal{C}^g$). Thus, each $Y_{jt}$ belongs to a local cluster $\mathcal{C}_{jt}^l \in \{1,\ldots,K\}$ and one global cluster $\mathcal{C}_t^g$. Each $Y_{jt},\, t = 1,\ldots,p$ is assumed to be independently drawn from a $K$-component mixture distribution with parameters $\theta_{j1},\ldots,\theta_{jK}$. The local clusters $\mathcal{C}_j^l = (\mathcal{C}_{j1}^l,\ldots,\mathcal{C}_{jp}^l)$ are dependent on the global clusters $\mathcal{C}^g = (\mathcal{C}_1^g,\ldots,\mathcal{C}_p^g)$ as
\begin{equation}
\mathrm{P}(\mathcal{C}_{jt}^l  = k \mid \mathcal{C}_{j}^g) = \nu(k, \mathcal{C}_{j}^g ,\alpha_j),
\end{equation}
where $\nu{\cdot}$ is the dependence function controlled by the parameter $\alpha_j$. The form of the dependence function is given by
\begin{equation}
\nu(\mathcal{C}_{jt}^l , \mathcal{C}_{t}^g , \alpha_j) = \left\{
	\begin{array}{ll}
		\alpha_j  & \mbox{if }  \mathcal{C}_{jt}^l = \mathcal{C}_{t}^g  \\
		\frac{1-\alpha_j}{K-1} & \mbox{otherwise}
	\end{array}
\right.
\end{equation}
where $\alpha_j \in [1/K,1]$ acts as an adherence parameter of the local clustering for data source $j$ to the global one.

Assuming a Dirichlet prior on the probabilities $\pi_k = \mathrm{P}(\mathcal{C}_{t}^g = k)$, we can get the probability that variable $t$ belongs to cluster $k$ in data source $j$ as
\begin{equation}
\mathrm{P}(\mathcal{C}_{jt}^l = k \mid \pi_1,\ldots,\pi_K) = \pi_k \alpha_j + (1 - \pi_k) \frac{1-\alpha_j}{K-1}, 
\end{equation}
and the conditional distribution of the global clusters is obtained as 
\begin{equation}
\mathrm{P}(\mathcal{C}_t^g = k \mid \mathcal{C}_1^l,\ldots,\mathcal{C}_p^l, \pi_1,\ldots,\pi_K , \alpha_1,\ldots,\alpha_J) \propto \pi_k \prod_{j = 1}^{J} \nu(\mathcal{C}_{jt}^l, k, \alpha_j).
\end{equation}

The parameter estimates along with the local and global clusters can be obtained using MCMC as described in \cite{lock2013bayesian}. One major challenge in application of this method is in the choice of the number of clusters. One adhoc technique would be to choose the maximum of the number of clusters obtained from clustering the data sources separately using our method based on DP-means. 

\subsection{Application to pan-Cancer proteomic data}

We consider the proteomics data described in the previous section, and apply the modified BCC method as described above to cluster the proteins integrating over multiple cancer types. We would like to explore how the proteins group together across multiple sources especially when the inter-dependence among the sources are modeled statistically. We consider two different types of lung cancers, namely LUAD and LUSC, to find the groups of proteins which move together in these cancer types. For both the cancer types, we found that for separate clusterings using our method described in this paper, there are nine clusters which have at least four members. Hence we specified the number of clusters in the modified BCC method to be nine. The resulting global cluster is shown in Figure \ref{fig:lungBCC}.

\begin{figure}[h]
\includegraphics[scale=0.25]{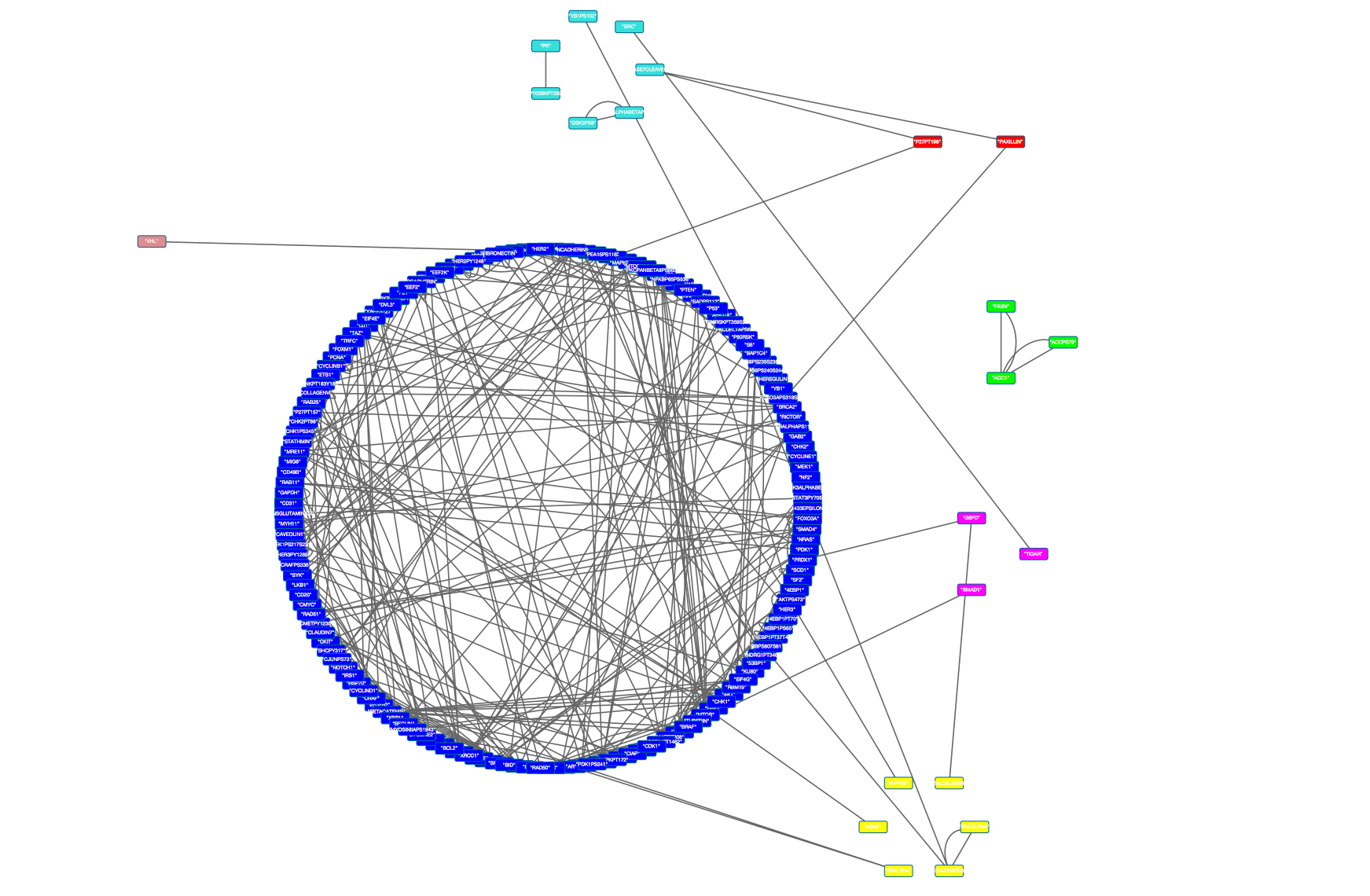}
\caption{Figure showing global clustering of proteins for LUAD and LUSC cancer types.}
\label{fig:lungBCC}
\end{figure}

\newpage
\section{Details of posterior computations using MCMC}

\subsection{MCMC for Bayesian neighborhood selection}

We present the details of the MCMC procedure for the rescaled spike and slab prior. The MCMC procedure is given in detail in \cite{ishwaran2005spike}. The procedure, known as Stochastic Variable Selection (SVS) is given as below. We present the details for the regression model with response $X_l$ and the rest of the variables as regressors.

Posterior values $(\bm{\beta_l}, \bgamma_l, \bm{\tau_l}, u_l, \sigma_l^2 \mid \bm{X_l^*})$ are drawn using the hierarchical Bayesian model (3.2) of the main manuscript, where $\bm{X_l^*} = (X_{l,1}^*,\ldots,X_{l,n}^*)^T$, and $\bm{\tau_l} = (\tau_{l(j)})_{j \neq l}$. Also let us denote $\bX_{-l} = (\bX_1,\ldots,\bX_{l-1},\bX_{l+1},\ldots,\bX_p), \lambda_{l(j)} = \gamma_{l(j)} \tau_{l(j)}^2$ and $\bm{\lambda_l} = (\lambda_{l(j)})_{j \neq l}$. The Gibbs sampler algorithm runs as follows --
\begin{enumerate}
\item Simulate $\bm{\beta_l} \mid \bm{\lambda_l},  \sigma_l^2, \bm{X_l^*}$ from a multivariate Normal distribution with mean $\Sigma_l \bX_{-l}^T \bm{X_l^*}$ and covariance $\sigma_l^2 \Sigma_l$, where 
\begin{equation*}
\Sigma_l = (\bX_{-l}^T\bX_{-l} + n\sigma_l^2 \Lambda_l^{-1})^{-1},\, \Lambda_l = \mathrm{diag}(\lambda_{l(1)},\ldots,\lambda_{l(l-1)}, \lambda_{l(l+1)}, \ldots,\lambda_{l(p)}).
\end{equation*}

\item Simulate 
\begin{equation*}
\gamma_{l(j)} \mid \bm{\beta_l},\bm{\tau_l}, u_l \stackrel{indep.}{\sim} \frac{u_{1,l(j)}}{u_{1,l(j)} + u_{2,l(j)}}\delta_{\nu_0}(\cdot) + \frac{u_{2,l(j)}}{u_{1,l(j)} + u_{2,l(j)}}\delta_1(\cdot), \, j \neq l,
\end{equation*}
where 
$u_{1,l(j)} = (1-u_l)\nu_0^{-1/2}\exp(-\frac{\beta_{l(j)}^2}{2\nu_0\tau_{l(j)}^2}),\, u_{2,l(j)} = u \exp(-\frac{\beta_{l(j)}^2}{2\tau_{l(j)}^2})$.

\item Simulate $\tau_{l(j)}^{-2} \mid \bm{\beta_l}, \gamma_{l(j)} \stackrel{indep.}{\sim} \mathrm{Gamma}\left(a_1 + 1/2, a_2 + \frac{\beta_{l(j)}^2}{2\gamma_{l(j)}}\right),\, j \neq l.$

\item Simulate $u_l \mid  \bm{\lambda_l} \sim \mathrm{Beta}(1 + \# \{j:\lambda_{l(j)} = 1\}, 1 + \# \{ j: \lambda_{l(j)} = \nu_0\})$.

\item Simulate $\sigma_l^{-2} \mid \bm{\beta_l}, \bm{X_l^*} \sim \mathrm{Gamma}(b_1 + n/2, b_2 + \|\bm{X_l^*} - \bX_{-l}\bm{\beta}\|^2/2n).$

\item Set $\lambda_{l(j)} = \gamma_{l(j)} \tau_{l(j)}^2,\, j \neq l.$

\end{enumerate}

The above steps complete one iteration of the Gibbs sampler. 

\subsection{MCMC for Clustering using Dirichlet Process Mixture Models}

We now present the details of the MCMC procedure for the Dirichlet Process Mixture Models used in graph clustering as described in Section 4 of the main manuscript. The choice of conjugate priors leads to availability of exact analytical forms of the posterior distributions so that Gibbs sampling is easily accomplished. The steps in the MCMC procedure are as below.

Given $\{\mu_c^{(t-1)}\}_{c=1}^{C}$ and $\{z_l^{(t-1)}\}_{l=1}^{p}$ from iteration $t-1$, $\{\mu_c^{(t)}\}_{c=1}^{C}$ and $\{z_l^{(t)}\}_{i=1}^{p}$ are sampled as --
\begin{enumerate}
\item Set $\bm{z} = \bm{z}^{(t-1)}$.
\item For $l=1,\ldots,p,$
\begin{enumerate}
\item As we are going to sample a new $z_l$ for data point $\bm{y}_l$, remove $\bm{y}_l$ from cluster $z_l$.
\item If the $\bm{y}_l$ removed above was the only data point in that cluster, the cluster becomes empty. We remove this cluster, and $C$ is decreased by $1$.
\item We re-arrange the clusters so that the cluster labels are $1,\ldots,C.$
\item A new sample for $z_l$ is drawn as
\begin{eqnarray*}
\mathrm{Pr}(z_l = c, c \leq C) &\propto& \frac{n_{c,-l}}{n+\alpha_0-1}\mathrm{N}(\bm{y}_l \mid \mu_c^{(t-1)}), \\
\mathrm{Pr}(z_l = C + 1) &\propto & \frac{\alpha_0}{n+\alpha_0-1}.
\end{eqnarray*}
\item If $z_l = C+1$, we get a new cluster. The cluster index is $C+1$. Sample a new cluster parameter $\mu_{C+1}$ from a multivariate Normal distribution with mean $(\rho^{-1}I + \sigma^{-1}I)^{-1}\sigma^{-1}\bm{y}_l$ and covariance $(\rho^{-1}I + \sigma^{-1}I)^{-1}$.
\end{enumerate}
\item For $c=1,\ldots,C$, sample cluster parameter $\mu_c^{(t)}$ for each cluster from a multivariate Normal distribution $\mathrm{N}((\rho^{-1}I + n_c\sigma^{-1}I)^{-1}\sum_{z_l = c}\bm{y}_l,  (\rho^{-1}I + n_c\sigma^{-1}I)^{-1})$, where $n_c$ is the number of data points in cluster $c$.
\item Set $\bm{z}^{(t)} = \bm{z}$.
\end{enumerate}

\newpage
\section{Supplementary figures for different cancer types}

\begin{figure}[h]
\includegraphics[scale=0.3]{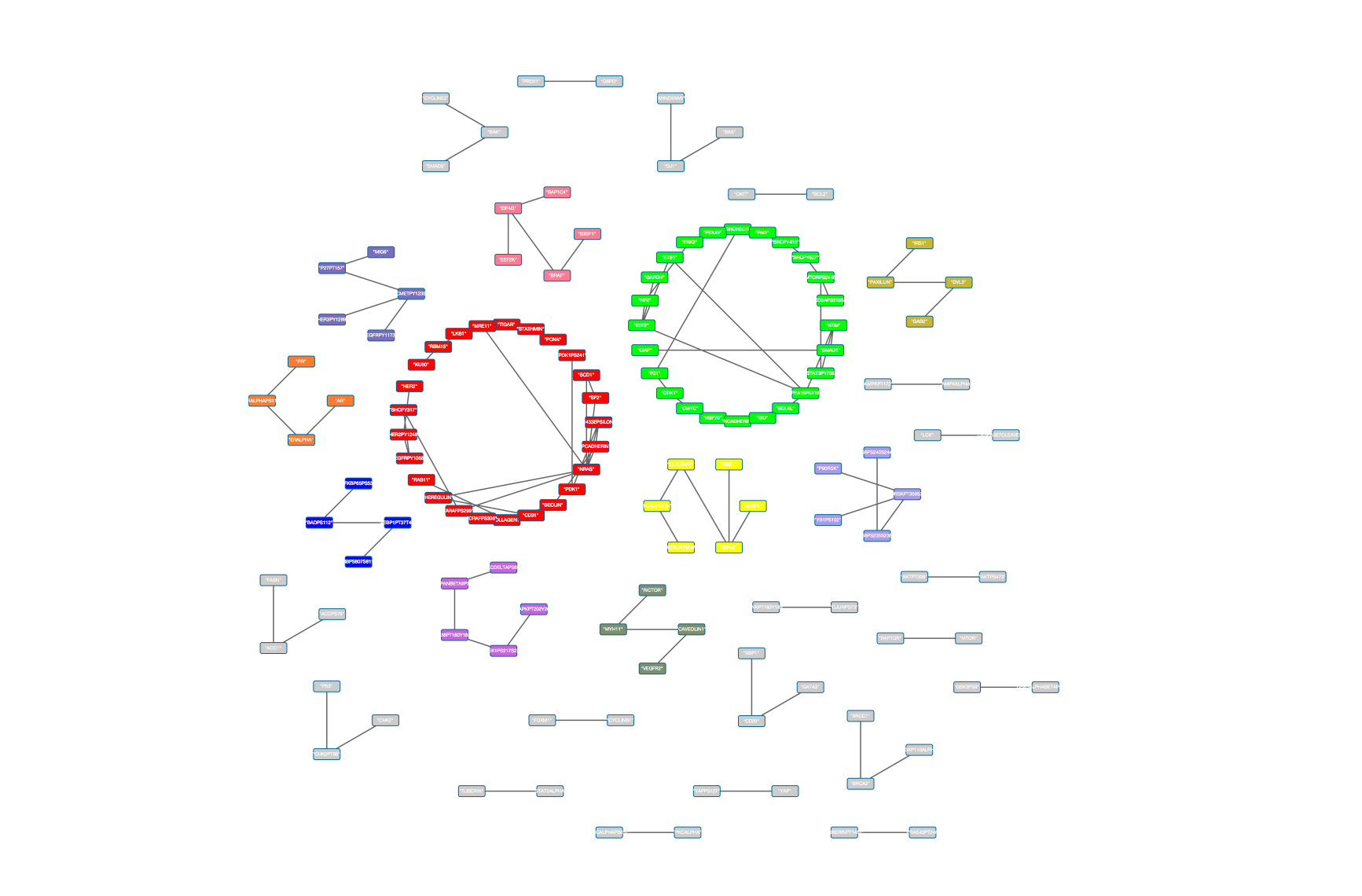}
\caption{Clusters for Uterine corpus endometrial carcinoma (UCEC). Clusters with at least 4 proteins are color-coded; those with fewer are gray.}
\end{figure}

\begin{figure}
\centering
   \begin{subfigure}[b]{0.75\textwidth}
   \includegraphics[width=1\linewidth]{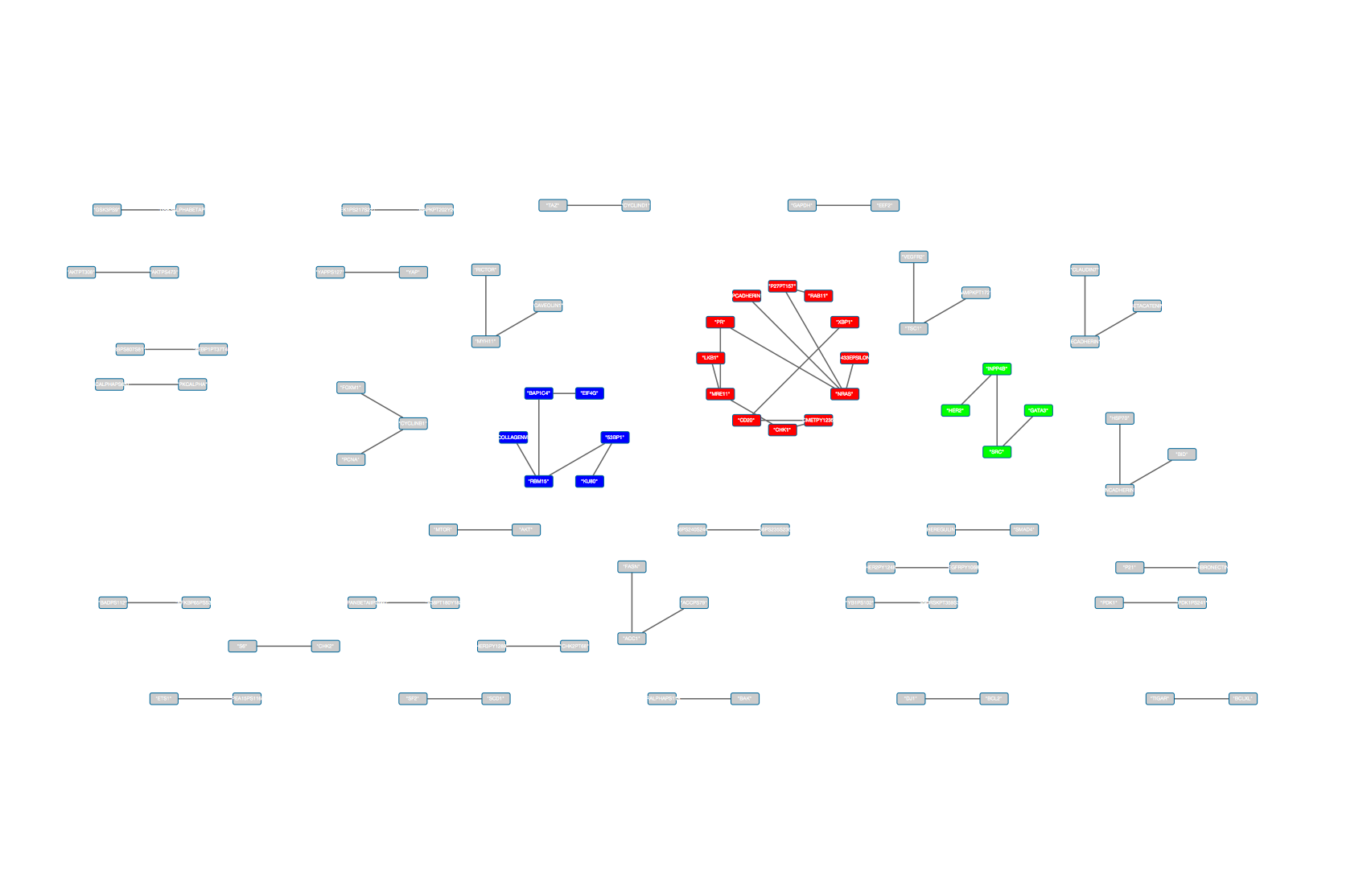}
   \caption{}
\end{subfigure}

\begin{subfigure}[b]{0.75\textwidth}
   \includegraphics[width=1\linewidth]{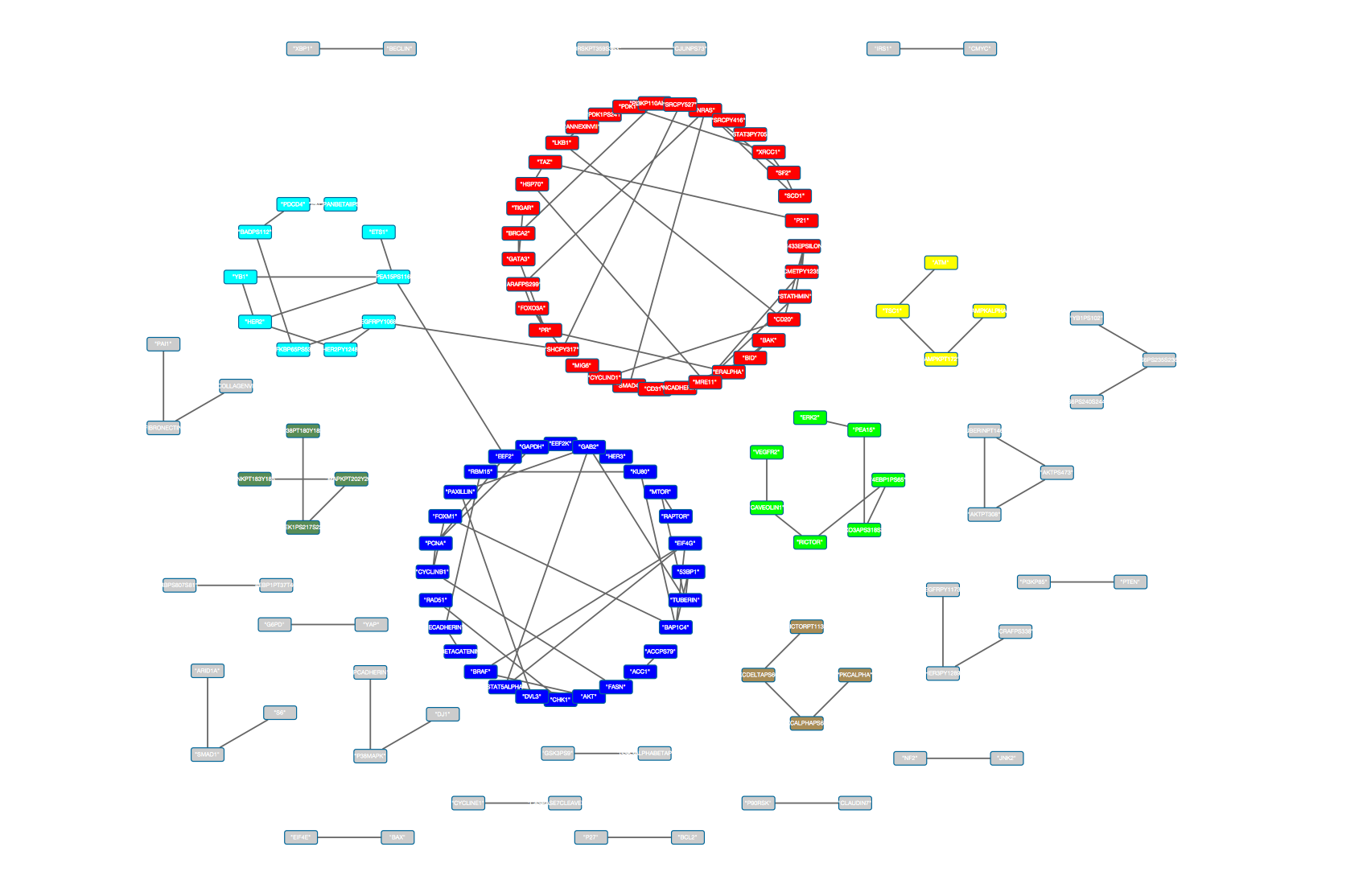}
   \caption{}
\end{subfigure}
\caption{Protein networks for (a) Bladder urothelial carcinoma (BLCA) and (b) Colon adenocarcinoma (COAD) cancer types. Clusters with at least 4 proteins are color-coded; those with fewer are gray.}
\end{figure}

\begin{figure}
\centering
   \begin{subfigure}[b]{0.75\textwidth}
   \includegraphics[width=1\linewidth]{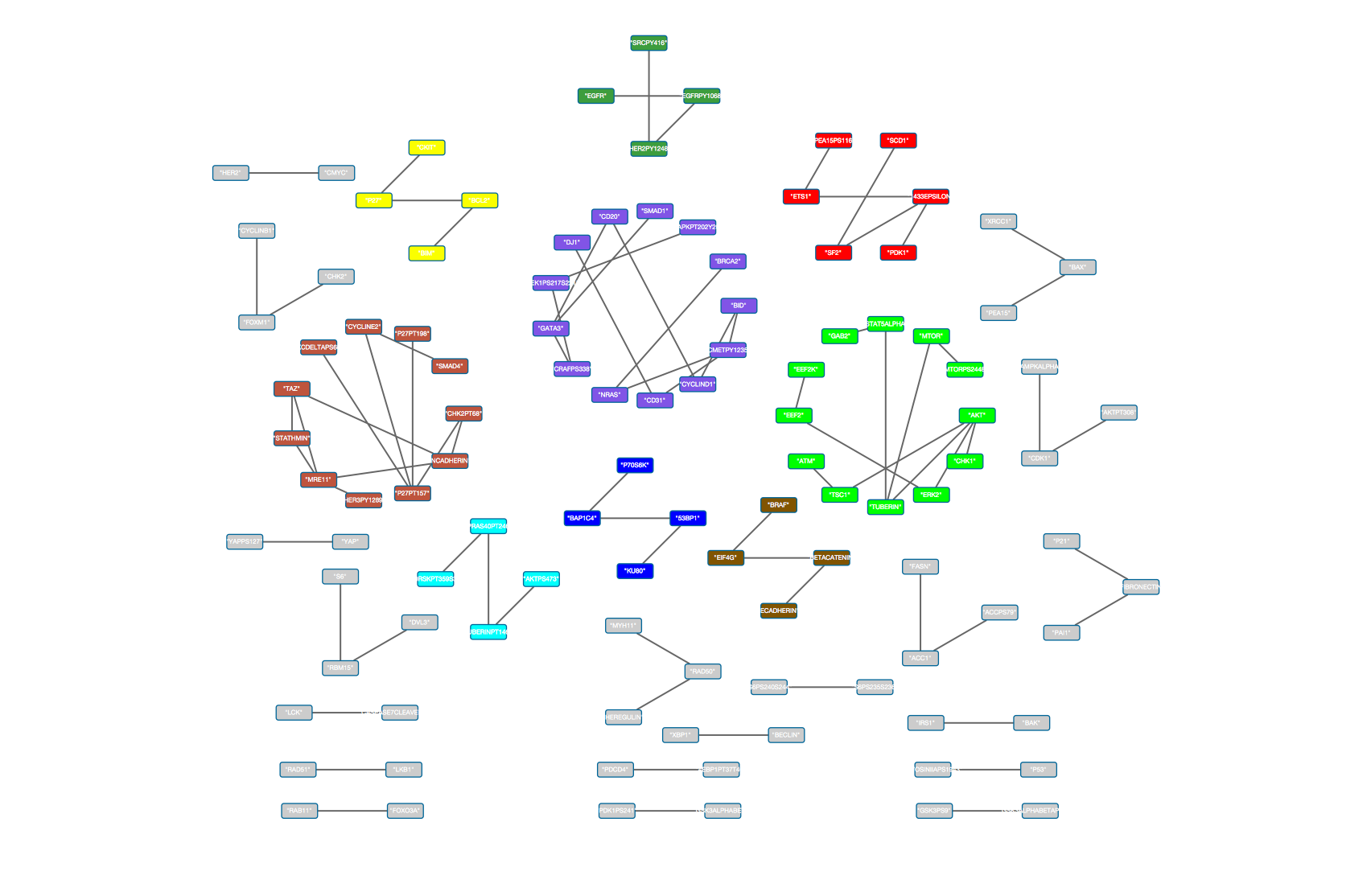}
   \caption{}
\end{subfigure}

\begin{subfigure}[b]{0.75\textwidth}
   \includegraphics[width=1\linewidth]{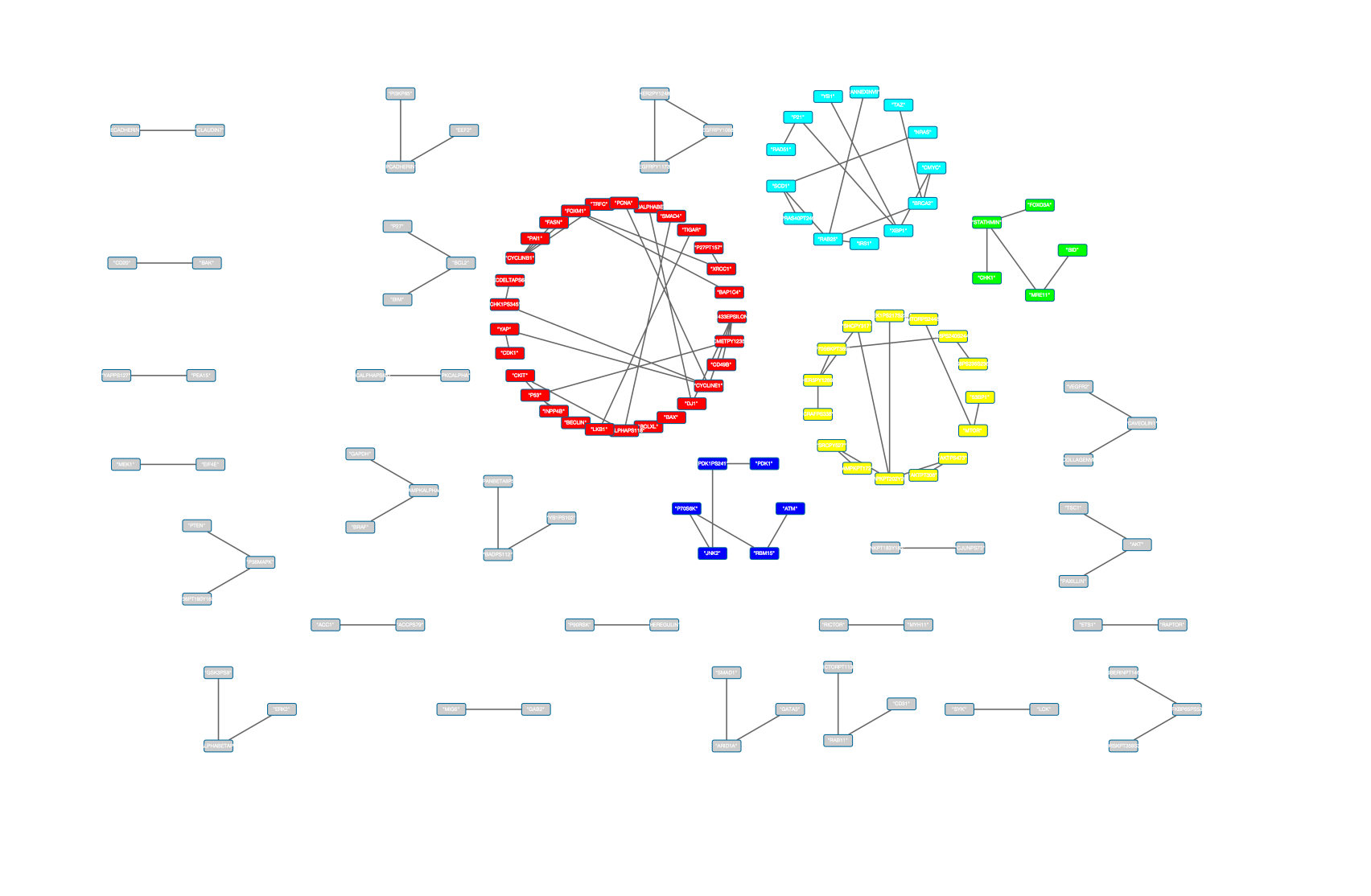}
   \caption{}
\end{subfigure}
\caption{Protein networks for (a) Head and Neck squamous cell carcinoma (HNSC) and (b)  Renal clear cell carcinoma (KIRC) cancer types. Clusters with at least 4 proteins are color-coded, those with fewer are gray.}
\end{figure}

\begin{figure}
\centering
   \begin{subfigure}[b]{0.75\textwidth}
   \includegraphics[width=1\linewidth]{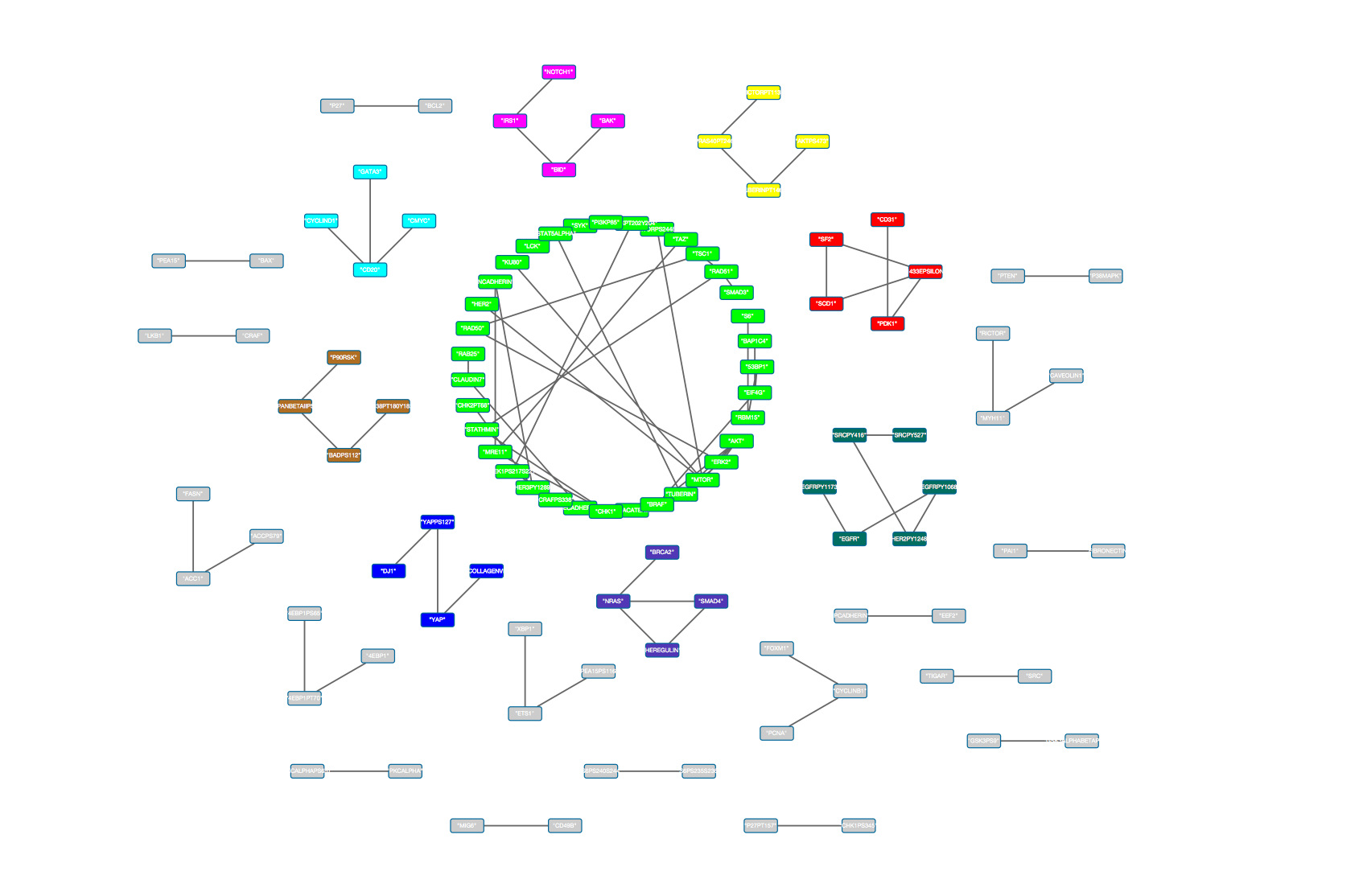}
   \caption{}
\end{subfigure}

\begin{subfigure}[b]{0.75\textwidth}
   \includegraphics[width=1\linewidth]{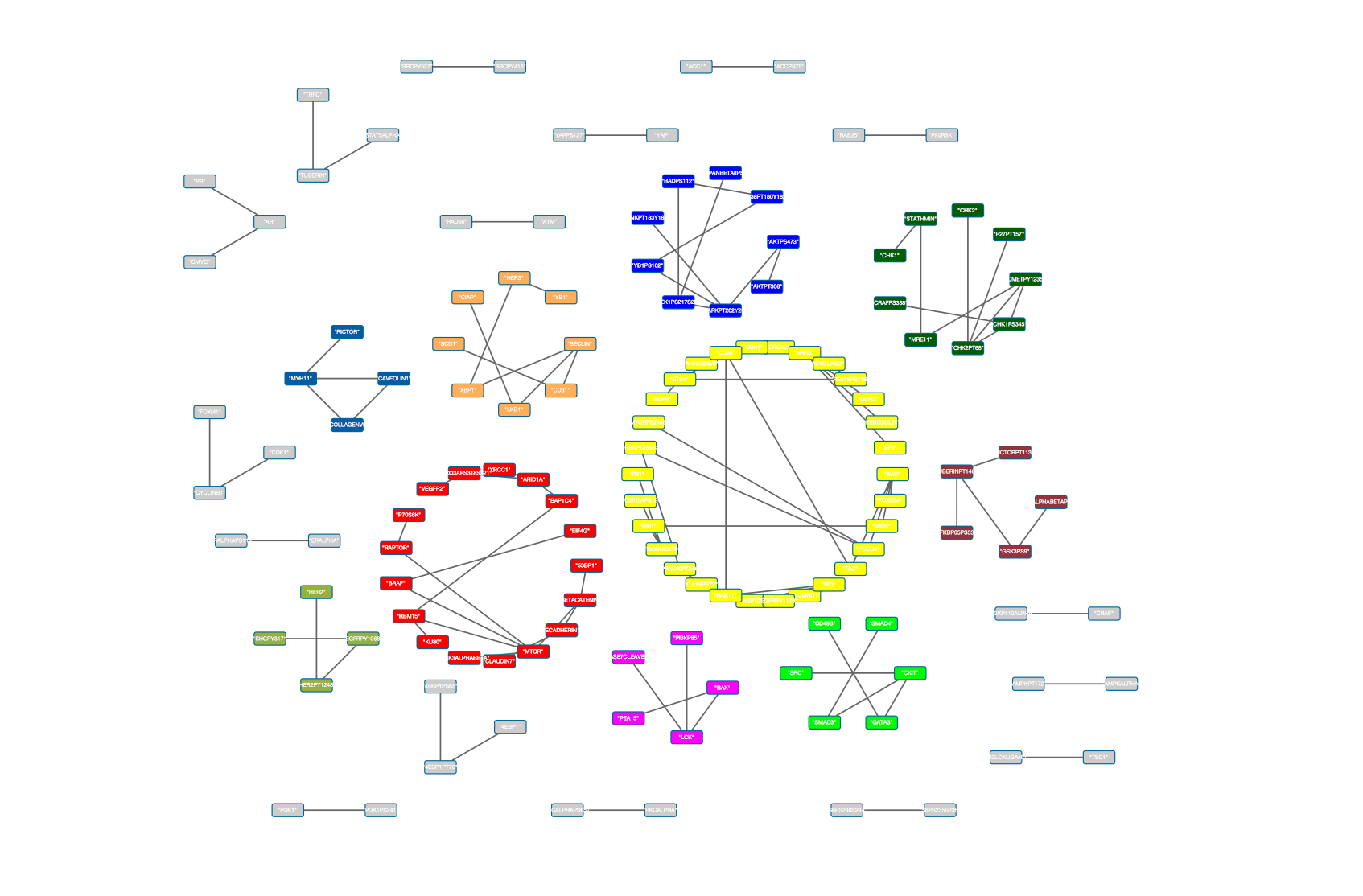}
   \caption{}
\end{subfigure}
\caption{Protein networks for (a) Lung adenocarcinoma (LUAD) and (b) Ovarian cystadenocarcinoma (OVCA) cancer types. Clusters with at least 4 proteins are color-coded, those with fewer are gray.}
\end{figure}

\newpage

\bibliographystyle{apalike}
\bibliography{clustbibfile}

\end{document}